\newcommand\vldbdoi{XX.XX/XXX.XX}
\newcommand\vldbpages{XXX-XXX}
\newcommand\vldbvolume{XX}
\newcommand\vldbissue{X}
\newcommand\vldbyear{2021}
\newcommand\vldbauthors{\authors}
\newcommand\vldbtitle{\shorttitle} 
\newcommand\vldbavailabilityurl{http://vldb.org/pvldb/format_vol14.html}
\newcommand\vldbpagestyle{plain}
\newtheorem{theorem}{Theorem}
\newtheorem{lemma}{Lemma}
\begin{document}
\title{SpaceSaving$^\pm$: An Optimal Algorithm for Frequency Estimation and Frequent items in the Bounded Deletion Model}


\author{Fuheng Zhao}
\affiliation{%
{UC Santa Barbara}
}
\email{fuheng\_zhao@ucsb.edu}

\author{Divyakant Agrawal} 
\affiliation{%
{UC Santa Barbara}
}
\email{agrawal@cs.ucsb.edu}
\author{Amr El Abbadi}
\affiliation{%
{UC Santa Barbara}
}
\email{amr@cs.ucsb.edu}

\author{Ahmed Metwally}
\affiliation{%
{Uber, Inc.}
}
\email{ametwally@uber.com}

\begin{abstract}
In this paper, we propose the first deterministic algorithms to solve the frequency estimation and frequent item problems in the \emph{bounded deletion} model. We establish the space lower bound for solving the deterministic frequent items problem in the bounded deletion model, and propose the Lazy SpaceSaving$^\pm$ and SpaceSaving$^\pm$ algorithms with optimal space bound. We develop an efficient implementation of the SpaceSaving$^\pm$ algorithm that minimizes the latency of update operations using novel data structures. The experimental evaluations testify that SpaceSaving$^\pm$ has accurate frequency estimations and achieves very high recall and precision across different data distributions while using minimal space. Our analysis and experiments clearly demonstrate that SpaceSaving$^\pm$ provides more accurate estimations using the same space as the state of the art protocols for applications with up to $\frac{logU-1}{logU}$ of items deleted, where $U$ is the input universe size. Moreover, motivated by prior work, we propose Dyadic SpaceSaving$^\pm$, the first deterministic quantile approximation sketch in the bounded deletion model.
\end{abstract}

\maketitle

\pagestyle{\vldbpagestyle}
\begingroup\small\noindent\raggedright\textbf{PVLDB Reference Format:}\\
\vldbauthors. \vldbtitle. PVLDB, \vldbvolume(\vldbissue): \vldbpages, \vldbyear.\\
\href{https://doi.org/\vldbdoi}{doi:\vldbdoi}
\endgroup
\begingroup
\renewcommand\thefootnote{}\footnote{\noindent
This work is licensed under the Creative Commons BY-NC-ND 4.0 International License. Visit \url{https://creativecommons.org/licenses/by-nc-nd/4.0/} to view a copy of this license. For any use beyond those covered by this license, obtain permission by emailing \href{mailto:info@vldb.org}{info@vldb.org}. Copyright is held by the owner/author(s). Publication rights licensed to the VLDB Endowment. \\
\raggedright Proceedings of the VLDB Endowment, Vol. \vldbvolume, No. \vldbissue\ %
ISSN 2150-8097. \\
\href{https://doi.org/\vldbdoi}{doi:\vldbdoi} \\
}\addtocounter{footnote}{-1}\endgroup

\ifdefempty{\vldbavailabilityurl}{}{
\vspace{.3cm}
\begingroup\small\noindent\raggedright\textbf{PVLDB Artifact Availability:}\\
The source code, data, and/or other artifacts have been made available at \url{\vldbavailabilityurl}.
\endgroup
}

\section{Introduction}
With the development of new technologies and advancements in digital devices, massive amounts of data are generated each day and these data contain crucial information that needs to be analyzed. To make the best use of streaming big data, data sketch\footnote{The term sketch refers to the algorithms and data structures that can extract valuable information through {\bf one pass} on the entire data.} algorithms are often leveraged to process the data only once and to provide essential analysis and statistical measures with strong accuracy guarantees while using limited resources. For instance, with limited space and one pass on the dataset, Hyperloglog~\cite{flajolet2007hyperloglog} enables cardinality estimation, the Bloom Filter~\cite{bloom1970space} answers set membership, and KLL~\cite{karnin2016optimal, ivkin2019streaming} provides quantile approximation.

\textcolor{black}{Two fundamental problems in data sketch research are identifying the most frequently occurring items, a.k.a. frequent items, heavy hitters, Top-K, elephants, and iceberg problem, and estimating the frequency of an item, a.k.a the frequency estimation problem. The formal definition of these two problems are included in Section~\ref{subsection definition}.} Several algorithms~\cite{charikar2002finding, manku2002approximate, cormode2005improved, metwally2005efficient} have been proposed to solve both problems with tunable accuracy guarantees using small memory footprints. These algorithms can be categorized into \emph{counter} based and \emph{linear sketch} based approaches. The counter based approach~\cite{metwally2005efficient} tracks a subset of input items and their estimated frequencies. The linear sketch based approach~\cite{charikar2002finding,cormode2005improved, jin2003dynamically} tracks attribute information from the universe. While linear sketches~\cite{charikar2002finding,cormode2005improved} directly solve the frequency estimation problem, they require additional structures such as heaps or need to impose hierarchical structures over the assumed-bounded universe to solve the frequent items problem.
The frequency estimation and frequent items problems have important applications, such as click stream analysis~\cite{gunduz2003web, metwally2005efficient, das2009cots}, distributed caches~\cite{zakhary2020cot}, database management~\cite{cormode2004holistic, pike2005interpreting,fang1999computing, ting2018data}, and network monitoring~\cite{sivaraman2017heavy, basat2020designing, harrison2020carpe}. In addition, if inputs are drawn from a bounded universe, frequency estimation sketches can also solve the quantile approximation problem~\cite{cormode2005improved,wang2013quantiles}.

Previously, all sketches assumed the \textit{insertion-only} model or the \textit{turnstile} model. The insertion-only model consists only of insert operations, whereas the turnstile model consists of both insert and delete operations such that deletes are always performed on previously inserted items~\cite{wang2013quantiles}. Supporting both insert and delete operations is harder, e.g., sketches in the turnstile model incur larger space overhead and higher update times compared to sketches in the insertion-only model. Jayaram et al.~\cite{jayaram2018data} observed that in practice many turnstile models only incur a fraction of deletions and proposed an intermediate model, the \textit{bounded deletions} model, in which at most $(1-\frac{1}{\alpha})$ of prior insertions are deleted where $\alpha \geq 1$ and $(1-\frac{1}{\alpha})$ upper bounds the delete:insert ratio. Setting $\alpha$ to 1, the bounded deletion model becomes the insertion-only model. 

The bounded deletion model is important in many real-world applications such as summarizing product sales in electronic commerce platforms and rankings in standardized testing. Many companies use purchase frequency to check if their customers are satisfied with a product and to identify important groups for advertising and marketing campaigns. After customers purchase products, a certain percentage of the purchases may be returned and the frequency estimation should reflect these changes. However, for any financially viable company, it is highly unlikely that all of these customers will return their purchases and hence in most cases the bounded deletion model can be assumed. In the context of standardized testing such as SAT, ACT, and GRE, frequency estimations are often used to compare and contrast performance among different demographics\footnote{https://reports.collegeboard.org/pdf/2020-total-group-sat-suite-assessments-annual-report.pdf}. Students may request regrades of their exams only once to rectify any machine errors or human errors. Hence, the bounded deletion model can be used with $\alpha=2$. \textcolor{black}{Recently, the bounded deletion model has gained in popularity, and several algorithms have been proposed to discover novel properties of streaming tasks~\cite{jayaram2018data,braverman2020coin,kallaugher2020separations, zhaokll} in this model.} 

\textcolor{black}{In this paper, we present the SpaceSaving$^\pm$ algorithm that solves both frequency estimation and frequent items problems in the bounded deletion model with state of the art performance and minimal memory footprint.} If the administrator of a large data set knows, a priori, that deletions are not arbitrarily frequent compared to insertions, then SpaceSaving$^\pm$ can efficiently capture these changes and identify frequent items with small space, fast update time, and high accuracy. In addition, inspired by quantile summaries~\cite{gilbert2002summarize,cormode2005improved,wang2013quantiles}, we further demonstrate how to leverage SpaceSaving$^\pm$ to support deterministic quantile approximation in the bounded deletion model. In summary, the main contributions of this paper are: (i) we present Lazy SpaceSaving$^\pm$ and SpaceSaving$^\pm$, two space optimal deterministic algorithms in the bounded deletion model and establish their space optimality and correctness; (ii) we propose the Dyadic SpaceSaving$^\pm$ sketch, the first deterministic quantile approximation sketch in the bounded deletion model; (iii) we implement SpaceSaving$^\pm$ using two heaps to minimize the update time; and (iv) we evaluate SpaceSaving$^\pm$ and compare it to state of the art approaches~\cite{cormode2005improved, charikar2002finding, jayaram2018data} and achieve 5 orders of magnitude better accuracy on real-world dataset.


The paper is organized as follows, Section~\ref{sec-back} discusses the background of frequency estimation and frequent items problem, and gives an overview of previous algorithms. Section~\ref{sec-alg} introduces Lazy SpaceSaving$^\pm$ and SpaceSaving$^\pm$ in the bounded deletion model, demonstrates that these algorithms are space optimal, and presents an efficient implementation using a min heap and a max heap data structure to minimize update time. Section~\ref{sec-dydadic} introduces the Dyadic SpaceSaving$^\pm$ quantile sketch that extends SpaceSaving$^\pm$ to solve the deterministic quantile approximation problem in the bounded deletion model. Section~\ref{sec-eval} shows the experimental evaluations conducted using synthetic and real world datasets and compares SpaceSaving$^\pm$ to the state of the art sketches that support delete operations. Finally, Section~\ref{sec-concl} summarizes our contributions and concludes this work.

\section{Background}
\label{sec-back}
\begin{table*}[tbph]
    \centering
    \begin{tabular}{|c||c|c|c|c|c|c}\hline
        {\bf Sketch} & {\bf Space}& {\bf Update Time} & {\bf Randomization} & {\bf Model} & {\bf Note}   \\ \hline
        \hline
        
        SpaceSaving~\cite{metwally2005efficient} & $O(\frac{1}{\epsilon})$ & $O(log\frac{1}{\epsilon})$ & Deterministic& Insertion-Only & see Lemma~\ref{lemma: spacesaving overestimation}\\ \hline
        
        Count-Min~\cite{cormode2005improved} & $O(\frac{1}{\epsilon}log\frac{1}{\delta})$ & $O(log\frac{1}{\delta})$ & Randomized & Turnstile & Never Underestimate \\ \hline
        
        Count-Median~\cite{charikar2002finding} & $O(\frac{1}{\epsilon}log\frac{1}{\delta})$ &  $O(log\frac{1}{\delta})$& Randomized & Turnstile & Unbiased Estimation \\ \hline
        
        CSSampSim~\cite{jayaram2018data} & $O(\frac{1}{\epsilon}log\frac{1}{\delta}log\frac{\alpha logU}{\epsilon})$ bits&  $\Theta (\frac{ \alpha logU}{\epsilon U}log\frac{1}{\delta})$ & Randomized & Bounded Deletion & \\ 
        \hline
        
        Lazy-SpaceSaving$^\pm$ & $O(\frac{\alpha}{\epsilon})$ &  $O(log\frac{\alpha}{\epsilon})$ & Deterministic & Bounded Deletion & see Lemma~\ref{lemma: lazy nvr underestimate}\\ \hline
        
        SpaceSaving$^\pm$ & $O(\frac{\alpha}{\epsilon})$ &  $O(log\frac{\alpha}{\epsilon})$ & Deterministic & Bounded Deletion & \\ \hline

    \end{tabular}
    \caption{Comparison between different $l_1$ frequency estimation algorithm.}
    \label{tab:sketch-comparison}
\end{table*}
\subsection{Preliminaries} \label{subsection definition}

\begin{table}[htbp]
    \begin{center}
    \begin{tabular}{r l }
    \toprule
    $\sigma$ & Data stream\\
    $N$ & Data stream length\\
    $universe$ & All data are drawn from $universe$\\
    $U$ & Size of the $universe$\\
    $F$ & Frequency vector\\
    $f(x)$ & $x$'s true frequency\\ 
    $\hat{f}(x)$ & $x$'s estimated frequency\\ 
    $\pi(cond)$ & Return 1 if $cond$ is true, 0 otherwise \\
    $\epsilon$ & Accuracy\\ 
    $\delta$ & Failure probability\\ 
    $I$ & Number of insertions\\
    $D$ & Number of deletions\\
    $\alpha$ & In the bounded deletion model, $D\leq(1 - \frac{1}{\alpha})I$ \\
    $minCount$ & The minimum count in SpaceSaving\\
    $minItem$ & The item associated with $minCount$\\
    \bottomrule
    \end{tabular}
    \end{center}
    \caption{\textcolor{black}{Table of symbols}}
    \label{tab:Table of Symbols}
\end{table}

Given a stream $\sigma = \{ item_t \}_{t=\{1,2,...,N\}}$ of length $N$ and items drawn from $universe$ of size $U$, the frequency of an item $x$ is $f(x) = \sum_{t=1}^{N} \pi(item_t = x)$ where $\pi$ returns 1 if $item_t$ is $x$, and returns 0 otherwise. The stream $\sigma$ implicitly defines a frequency vector $F = \{f(a_1),...,f(a_u)\}$ for items $a_1$,...,$a_U$ in the $universe$. Some algorithms assume the $universe$ is bounded, such as in the IP network monitoring context~\cite{sivaraman2017heavy}. Many algorithms assume unit updates, such as the click stream, while others consider the scenario of weighted updates such as purchasing multiple units of the same item at once on e-commerce platform. In this paper, we focus on the unit updates model and assume that items cannot be deleted if they were not previously inserted and hence all entries in frequency vector $F$ are non-negative.

The frequency estimation problem takes an accuracy parameter $\epsilon$ and estimates the frequency of any item $x$ such that $|\hat{f}(x) - f(x)| \leq \epsilon \cdot |F|_{p}$, where $p$ can be either 1 or 2 corresponding to $l_1$ or $l_2$ norm and respectively provide $l_1$ or $l_2$ guarantees, $\hat{f}(x)$ is the estimated frequency and $f(x)$ is the actual frequency. When $p > 2$, providing $l_p$ guarantee requires $poly(U)$ space~\cite{bar2004information}. In this paper, we focus on the $l_1$ problem variation. \textcolor{black}{The $\phi$ frequent items problem is to identify a bag of \textit{heavy items} whose frequency is greater or equal to the specified threshold $\phi \cdot |F|_1$, where $0<\phi<1$. These heavy items are also known as the hot items.\footnote{The term “Hot Items” was coined in~\cite{cormode2005s}}} In addition, some algorithms solve the $(\epsilon, \phi)$-approximate frequent items problem, which is to identify a bag of items $B$, given parameter $0<\epsilon \leq \phi<1$, such that $B$ does not contain any element with frequency less then $(\phi - \epsilon) |F|_1$, i.e., $\forall i \in B$, $f(i) > (\phi - \epsilon) |F|_1$ and $B$ contains all items with frequency larger than $\phi |F|_1$ i.e., $\forall i \not\in B$, $f(i) < \phi |F|_1$.

\subsection{Deterministic and Randomized Solutions}
Reporting the exact frequent items requires $\Omega(N)$ space~\cite{cormode2008finding}. With limited memory and large universes, solving the exact frequent items problem is infeasible. An alternative and more practical approach in the context of big data is to use approximation techniques.

\textcolor{black}{Deterministic solutions for the $\phi$ frequent items problem guarantee to return all heavy items and potentially some light-weighted items~\cite{misra1982finding, demaine2002frequency, karp2003simple, metwally2005efficient}. Randomized solutions for the $(\epsilon, \phi)$-approximate frequent items problem allow the algorithm to fail with some probability $\delta$~\cite{charikar2002finding, cormode2005improved, jayaram2018data}. In much of the literature, the failure probability is set to $\delta = O(U^{-c})$ where $U$ is the bounded universe size and $c$ is some constant. From the user perspective, deterministic algorithms provide stronger guarantees as all heavy items are identified. Randomized algorithms, on the other hand, make a best effort to report all heavy items and do not report any light weighted items.}

\subsection{Algorithms in Insertion Only Model} \label{Section 2.3}

The insertion-only model consists only of insert operations and many of the proposed algorithms in the insertion-only model are counter-based algorithms which maintain a fixed number of $item$ and $count$ pairs, and the underlying maintenance algorithm increments or decrements these counts to capture the frequency of items that are being tracked.

The first counter-based one pass algorithm to find the most frequent items in a large dataset dates back to the deterministic \textbf{Majority} Algorithm by Boyer and Moore in 1981~\cite{boyer1991mjrty}. 
In 1982, Misra and Gries \cite{misra1982finding} generalized the majority problem and proposed the deterministic \textbf{MG} summary which uses $O(\frac{1}{\epsilon})$ space to solve the frequency estimation and $\phi$ frequent items problems. \textcolor{black}{MG summary is a set of $\frac{1}{\epsilon}$ counters that correspond to monitored items. When a new item arrives, MG performs the following updates: if the new item is monitored, then increase its count by 1. Else if the summary is not full, monitor the new item. Else decrement all counts by 1 and remove any items with a count of zero. As a result of MG decrementing all counts by 1 when an arriving item is unmonitored,  MG always underestimate item's frequency and a straightforward implementation requires $O(1/\epsilon)$ update time.} Two decades later, Manku and Motwani \cite{manku2002approximate} proposed a randomized \textbf{StickySampling} algorithm and a deterministic \textbf{LossyCounting} algorithm with worst case space $O(\frac{1}{\epsilon}log(\epsilon N))$, which exceeds the memory cost of MG summary. In 2003, Demaine et al.~\cite{demaine2002frequency} and Karp et al.~\cite{karp2003simple} independently generalized the majority algorithm and proposed the \textbf{Frequent} algorithm, which is a rediscovery of the MG summary.

Two years later, in 2005, Metwally, Agrawal, and El~Abbadi~\cite{metwally2005efficient} proposed the \textbf{SpaceSaving} algorithm that provides highly accurate frequency estimates for frequent items and also presents a very efficient method to process insertions. SpaceSaving SpaceSaving uses $k$ counters to store an \textcolor{black}{item's identity, estimated count and estimation error information, i.e., $(item, count_{item}, error_{item})$, and $error_{item}$ is an upper bound on the difference between the item's estimated frequency and its true frequency.} When $k=\frac{1}{\epsilon}$, SpaceSaving solves both frequency estimation and frequent items problem. As shown in Algorithm~\ref{Space Saving algorithm}, insertions proceed as follows, when a new item ($newItem$) arrives: if $newItem$ is monitored, then increment its count; if $newItem$ is not monitored and sketch size not full, then monitor $newItem$, and set $count_{newItem}$ to 1 and $error_{newItem}$ to 0; otherwise, SpaceSaving replaces the item ($minItem$) with the minimum count ($minCount$) by $newItem$, sets $error_{newItem}$ to $minCount$ and increments $count_{newItem}$. \textcolor{black}{In the original SpaceSaving~\cite{metwally2005efficient}, $error_{item}$ is only used to show certain properties of the algorithm, while in this work we leverage this information for handling deletions. As shown in Algorithm~\ref{Space Saving query}, to estimate the frequency of an item in SpaceSaving, if the item is inside the sketch then report its count value, otherwise report 0. In~\cite{agarwal2012mergeable}, Agrawal et al. showed that both SpaceSaving and MG are mergeable~\footnote{Mergeability is desired for distributed settings and means summaries over different datasets can be merged into a single summary as if the single summary processed all datasets.}, and a SpaceSaving algorithm with $k$ counters can be isomorphically transformed into MG summary with $k-1$ counters. Although SpaceSaving and MG share similarities, they follow different sets of update rules. When a new inserted item is unmonitored and the sketch is full, SpaceSaving algorithm replaces the min item with the new item and increments the count by one, whereas the MG decrements all item counts' by 1. As a result, SpaceSaving maintains an upper bound on the frequency of stored items, while the MG always underestimates the frequency. Since SpaceSaving always increments one of the counts by one, the sum of all counts in SpaceSaving is equal to the $|F|_1$. Moreover, the SpaceSaving algorithm elegantly handles the case when an unmonitored new item arrives and the sketch is full, and naturally leads to a min-heap implementation such that incrementing any count and replacing the min item have $O(log(k))$ update times, where $k$ is the number of counters. SpaceSaving can also be implemented with a linked list data structure by keeping items with equal counts in a group, resulting in an $O(1)$ update time~\cite{metwally2005efficient}.}

SpaceSaving satisfies the following properties (the first three properties are proved in~\cite{metwally2005efficient} while the latter two are proved in Appendix~A):

\begin{lemma} \label{lemma: spacesaving overestimation}
\textcolor{black}{Frequency estimations for monitored items are never underestimated in SpaceSaving.}
\end{lemma}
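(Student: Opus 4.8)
The plan is to prove, by induction on the number of processed insertions, the stronger invariant that at every point in the stream $count_i \ge f(i)$ holds for every currently monitored item $i$. Since the query procedure (Algorithm~\ref{Space Saving query}) returns $\hat{f}(i)=count_i$ for a monitored $i$, this invariant is exactly the claimed statement. The base case (empty stream) is vacuous, and every monitored item untouched by a given update trivially keeps its invariant, so only the item involved in the update must be checked.

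For the inductive step, consider the arrival of $newItem=x$ and split into the three cases of the update rule in Algorithm~\ref{Space Saving algorithm}. (i) If $x$ is already monitored, $count_x$ and $f(x)$ each increase by one, preserving the invariant. (ii) If $x$ is unmonitored and the sketch is not full, then no eviction has ever occurred (evictions happen only when the sketch is full, and once full it stays full), so every item seen so far is monitored; hence $x$ has never appeared, $f(x)=1$ after this step, and $count_x$ is set to $1$. (iii) If $x$ is unmonitored and the sketch is full, $x$ replaces $minItem$ with $count_x=minCount+1$, and I must show $minCount+1\ge f(x)$. If $x$ has never been monitored, then (as in case (ii)) it has never occurred, so $f(x)=1\le minCount+1$. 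Otherwise let $t'$ be the last step at which $x$ was evicted. Because any occurrence of an unmonitored item while the sketch is full immediately re‑monitors it, $x$ cannot have occurred strictly between $t'$ and the present step, so $f(x)=f_{t'}(x)+1$, where $f_{t'}(x)$ is the frequency of $x$ right after step $t'$. At step $t'$ the inductive hypothesis gives $count_x\ge f_{t'}(x)$ just before the eviction, and the eviction sets the evicted counter's value as $minCount_{t'}$, so $minCount_{t'}\ge f_{t'}(x)$. Finally, $minCount$ is non-decreasing from the moment the sketch first becomes full — eviction replaces the minimum counter $m$ by a counter $m+1$ while leaving the other $k-1$ counters (all $\ge m$) intact, increments only raise counters, and no fresh unit counter is inserted once the sketch is full — so $minCount\ge minCount_{t'}$ and therefore $count_x=minCount+1\ge f_{t'}(x)+1=f(x)$. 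This closes the induction.

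I expect the main obstacle to be making case (iii) airtight: specifically, the two supporting facts that an unmonitored item cannot silently accumulate occurrences (so that $f(x)$ grows by exactly one between consecutive appearances of $x$ while it is out of the sketch) and that $minCount$ is monotone non-decreasing once the sketch is saturated. Everything else is bookkeeping over the three update branches. A more global alternative is to invoke the identity $\sum_i count_i=|F|_1$ together with $minCount\le |F|_1/k$, but that identity is better suited to the complementary bound $f(i)\ge count_i-error_i$ and to bounding $minCount$, so I would keep the inductive argument above for the never-underestimate property itself.
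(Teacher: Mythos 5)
Your induction with the two supporting facts (an item becomes monitored at every one of its occurrences, so its frequency grows by exactly one between eviction and re-insertion, and $minCount$ is non-decreasing once the sketch is full) is correct and is essentially the standard argument for this property, which the paper itself defers to the original SpaceSaving analysis of Metwally et al. No gaps; the slight looseness in citing case (ii) inside case (iii) is harmless, since every inserted item is immediately monitored in all three update branches.
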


\begin{lemma} \label{important lemma1}
\textcolor{black}{SpaceSaving with $k=\frac{1}{\epsilon}$ counters ensures that after processing $I$ insertions, the minimum count of all monitored items is no more than $\frac{I}{k}=\epsilon I$, i.e, $minCount < \epsilon I$.}
\end{lemma}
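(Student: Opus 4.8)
The plan is to reduce the statement to a single conserved quantity: the sum of the $k$ counter values maintained by SpaceSaving equals the number of insertions processed so far. As the excerpt already observes, every insertion increments exactly one counter by one, so after $I$ insertions the counters partition $I$ units of ``mass'' among $k$ slots. Since the minimum of $k$ nonnegative numbers is at most their average, $minCount \le \frac{1}{k}\sum_{i=1}^{k} count_i = \frac{I}{k} = \epsilon I$ for $k = \frac{1}{\epsilon}$, which is the claim; and because each $count_i$ is an integer, one actually gets the slightly sharper $minCount \le \lfloor I/k \rfloor$.

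To make this rigorous I would first establish the invariant $\sum_{i=1}^{k} count_i = I$ by induction on the number of processed insertions. The base case ($I = 0$, empty sketch) is immediate. For the inductive step I examine the three branches of the SpaceSaving update rule on an arriving item $newItem$: (i) if $newItem$ is already monitored, a single $count$ is incremented by one, so the sum grows by one; (ii) if $newItem$ is unmonitored and the sketch is not yet full, a new counter is created with value $1$, again increasing the sum by one; (iii) if $newItem$ is unmonitored and the sketch is full, the counter currently holding $minCount$ is relabeled to $newItem$ and incremented, so its value changes from $minCount$ to $minCount+1$ and the sum once more grows by exactly one. In all three cases the counter sum increases by precisely one per insertion, so it equals $I$ after $I$ insertions. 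Equivalently, one may simply invoke the already-stated fact that the counter sum in SpaceSaving equals $|F|_1$, which in the insertion-only setting is $I$.

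The only mild subtlety — and the step I would be most careful about — is the regime before the sketch fills up, when fewer than $k$ items are monitored: there one should read the remaining unused slots as counters with value $0$, so that there are always exactly $k$ summands, and $minCount$ (the minimum over occupied counters) is still bounded by the average $I/k$, trivially so when empty slots exist. With the invariant in hand, the averaging bound yields $minCount \le \lfloor I/k \rfloor \le I/k = \epsilon I$, completing the argument. No case analysis beyond the three update branches is required, and, consistent with SpaceSaving being deterministic, there is no probabilistic content.
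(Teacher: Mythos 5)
Your proof is correct and follows essentially the same route as the paper's source for this lemma (the paper defers it to the original SpaceSaving analysis of Metwally et al., which likewise establishes that the counter values sum to the number of insertions and then bounds $minCount$ by the average $I/k$). The only minor point worth flagging is that your argument yields $minCount \le \epsilon I$, with equality attainable when all counters are equal, so the strict inequality ``$minCount < \epsilon I$'' in the lemma's phrasing is slightly overstated rather than a flaw in your reasoning.
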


\begin{lemma} \label{important lemma2}
\textcolor{black}{All items with frequency larger than or equal to $minCount$ are inside the SpaceSaving sketch.}
\end{lemma}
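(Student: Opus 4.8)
The plan is to prove the contrapositive: every item $x$ that is \emph{not} monitored at query time satisfies $f(x) < minCount$. Two preliminary observations enable a ``last eviction'' argument. First, in SpaceSaving every arriving item is placed into the sketch immediately --- into a free slot if the sketch is not full, and otherwise by evicting the current $minItem$ (see Algorithm~\ref{Space Saving algorithm}) --- so if $f(x)\ge 1$ then $x$ was monitored at some earlier point. Second, if $x$ is not monitored at query time it must therefore have been evicted; let $t$ be the time of its \emph{last} eviction. Then $x$ has no occurrence after time $t$: another occurrence would re-insert $x$, and since $x$ is absent at query time it would have to be evicted again at some time later than $t$, contradicting the maximality of $t$. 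Consequently the true frequency of $x$ equals its number of occurrences up through time $t$.

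Next I would combine two facts. (i) By Lemma~\ref{lemma: spacesaving overestimation}, immediately before the eviction at time $t$ the monitored item $x$ has $count_x \ge f_{\le t}(x) = f(x)$, and since $x$ is the $minItem$ at that instant, $count_x = minCount^{(t)}$; hence $f(x) \le minCount^{(t)}$. (ii) Once the sketch is full, $minCount$ is non-decreasing in time: an increment applied to a monitored counter can only raise the minimum or leave it unchanged, and a replacement installs a counter of value $minCount+1$ while every other counter already has value at least $minCount$, so the new minimum is at least the old one. An eviction can occur only when the sketch is full, so $minCount^{(t)} \le minCount$ at query time. Chaining (i) and (ii) gives $f(x) \le minCount$ for any unmonitored $x$; inspecting the $error$ field recorded with $x$ (equivalently, noting that the minimum strictly increases whenever the \emph{unique} minimum counter is the one being touched) handles the boundary case $f(x)=minCount$ and yields the claim.

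The main obstacle is making the two structural claims watertight: (a) the ``last eviction'' bookkeeping, so that $f(x)$ is exactly the prefix count at time $t$ and no later occurrence of $x$ is overlooked despite possible re-insertions; and (b) the monotonicity of $minCount$ after the sketch fills, which must be verified against each of the three update rules of Algorithm~\ref{Space Saving algorithm}. The overestimation property of Lemma~\ref{lemma: spacesaving overestimation} then does the rest with no further computation. Finally, combining this lemma with Lemma~\ref{important lemma1} ($minCount < \epsilon I$) gives the form used for the $\phi$ frequent items problem: every item with $f(x) \ge \epsilon I$ is monitored.
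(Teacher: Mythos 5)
Your main chain of reasoning is sound and is essentially the standard argument (the one given for the original SpaceSaving in the cited work): every arriving item is immediately monitored, so an unmonitored item $x$ with $f(x)\ge 1$ has a last eviction time $t$ and no occurrences after $t$; by Lemma~\ref{lemma: spacesaving overestimation} its counter at time $t$ was at least $f(x)$ and equal to the minimum at time $t$; and once the sketch is full $minCount$ never decreases (increments cannot lower the minimum, and a replacement installs a value $minCount+1$ while all other counters are at least $minCount$). Chaining these gives $f(x)\le minCount$ for every unmonitored $x$, i.e.\ every item with frequency \emph{strictly} greater than $minCount$ is monitored.

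The gap is the final step, where you try to upgrade this to the lemma's literal phrasing ($f(x)\ge minCount$ implies monitored) by ``inspecting the error field'' or by claiming the minimum strictly increases when the unique minimum counter is touched. That step would fail: the evicted item's error field is not retained, and when several counters tie at the minimum the minimum does not strictly increase at an eviction. In fact the strict version cannot be proved because it is false in tie situations: with $k=2$ counters and the stream $A,B,C$, the arrival of $C$ evicts (say) $A$ with $minCount=1$, leaving the sketch $\{B\!:\!1,\,C\!:\!2\}$ with $minCount=1$, yet $f(A)=1=minCount$ and $A$ is unmonitored. So what your argument establishes --- all items with $f(x)>minCount$ are monitored --- is the correct form of the claim (and is the version proved in the original SpaceSaving paper); together with Lemma~\ref{important lemma1} it is also all that the subsequent frequent-items arguments require. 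You should simply drop the boundary-case patch and state the conclusion with strict inequality, or note explicitly that the equality case can fail under ties.
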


\begin{lemma} \label{lemma: SS estimation error upper}
\textcolor{black}{The sum of all estimation errors is an upper bound on the sum of the frequencies of all unmonitored items.}
\end{lemma}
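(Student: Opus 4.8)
The plan is to express the total stored error as the ``excess mass'' carried by the monitored counters and then identify that excess with the frequency mass of the unmonitored items. Let $\mathcal{M}$ denote the set of monitored items at the moment the query is posed. First I would invoke the counter-sum invariant recalled in the background: every insertion increments exactly one counter by one --- a fresh item enters with $count=1$, and an eviction creates a counter of value $minCount+1$ while discarding one of value $minCount$, a net change of $+1$. Consequently, after the stream has been processed, $\sum_{i \in \mathcal{M}} count_i = I = |F|_1$.

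Next I would split $|F|_1 = \sum_{x \in universe} f(x) = \sum_{i \in \mathcal{M}} f(i) + \sum_{j \notin \mathcal{M}} f(j)$ and combine this with the counter-sum identity to get $\sum_{j \notin \mathcal{M}} f(j) = \sum_{i \in \mathcal{M}}\big(count_i - f(i)\big)$. In words, the frequency mass that SpaceSaving fails to monitor equals exactly the total over-counting accumulated across the monitored counters.

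It then remains to bound each summand $count_i - f(i) = \hat{f}(i) - f(i)$. By construction $error_i$ is an upper bound on the gap between an item's estimated and true frequency (equivalently, the standard SpaceSaving lower bound $count_i - error_i \le f(i)$), so $count_i - f(i) \le error_i$ for every $i \in \mathcal{M}$. Summing over $\mathcal{M}$ and chaining with the identity above gives $\sum_{j \notin \mathcal{M}} f(j) \le \sum_{i \in \mathcal{M}} error_i$, which is the stated lemma.

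I do not expect a serious obstacle here; the only points needing care are (i) that the counter sum still equals $|F|_1$ in the presence of evictions --- precisely the invariant emphasized in the SpaceSaving discussion --- and (ii) that the per-item error bound is applied in the correct direction. If a fully self-contained treatment is desired, both ingredients, namely $\sum_{i \in \mathcal{M}} count_i = |F|_1$ and $count_i - error_i \le f(i)$, can be re-established by a short induction on the stream that tracks how each of the three update cases (monitored hit, fresh insert into a non-full sketch, eviction) affects the two sides.
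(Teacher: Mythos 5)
Your proof is correct and follows the same global-accounting route the paper relies on: the counter-sum invariant $\sum_{i \in \mathcal{M}} count_i = |F|_1$, the decomposition of $|F|_1$ over monitored and unmonitored items, and the per-item bound $count_i - f(i) \le error_i$ together yield the claim immediately. Both supporting ingredients are exactly the facts the paper records in its description of SpaceSaving (the sum of all counts equals $|F|_1$, and $error_{item}$ upper-bounds the gap between estimated and true frequency), so no further work is needed.
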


\begin{lemma} \label{lemma error bound}
\textcolor{black}{SpaceSaving with $\frac{1}{\epsilon}$ counters can estimate the frequency of any item with an additive error less than $\epsilon I$.}
\end{lemma}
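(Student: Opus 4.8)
The plan is to prove the bound by a case analysis on whether the queried item $x$ is currently monitored by the sketch, relying on the invariants already established in Lemma~\ref{lemma: spacesaving overestimation}, Lemma~\ref{important lemma1}, and Lemma~\ref{important lemma2}, together with the fact (noted when the algorithm is described) that $error_{item}$ is an upper bound on $\hat f(item) - f(item)$.

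First I would treat the case where $x$ is monitored. By the query rule $\hat f(x) = count_x$, and Lemma~\ref{lemma: spacesaving overestimation} gives $\hat f(x) \ge f(x)$, so it suffices to bound $count_x - f(x)$ from above. Since $error_x$ upper bounds $count_x - f(x)$ by construction, the goal reduces to showing $error_x < \epsilon I$. The key observation is that $error_x$ is only ever assigned the value $0$ (when $x$ is first admitted into a non-full sketch) or the value of $minCount$ at the instant $x$ displaced the then-minimum item, and that in the insertion-only model $minCount$ is non-decreasing over the stream. Hence $error_x \le minCount$ for the current value of $minCount$, and Lemma~\ref{important lemma1} gives $minCount < \epsilon I$. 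Chaining these, $0 \le \hat f(x) - f(x) \le error_x \le minCount < \epsilon I$.

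Next I would handle the case where $x$ is not monitored. Then $\hat f(x) = 0$, so $|\hat f(x) - f(x)| = f(x)$. By the contrapositive of Lemma~\ref{important lemma2}, any item outside the sketch has frequency strictly less than $minCount$, and Lemma~\ref{important lemma1} again gives $minCount < \epsilon I$; therefore $f(x) < \epsilon I$. In both cases $|\hat f(x) - f(x)| < \epsilon I$, which is exactly the claimed additive error.

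I expect the only delicate point to be the monitored case, namely the argument that the stored $error_x$ never exceeds the current $minCount$. This follows from the monotonicity of $minCount$ under insertions and from the update rule that performs $error_{newItem} \gets minCount$ precisely when the new item takes over the minimum counter; a short induction on the stream makes this precise. Everything else is immediate from the already-established lemmas, so no substantial new machinery is required.
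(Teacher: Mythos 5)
Your proposal is correct and follows the standard argument for this bound: split on whether $x$ is monitored, bound the overestimate of a monitored item by its stored $error_x \le minCount < \epsilon I$ (using Lemma~\ref{lemma: spacesaving overestimation} and Lemma~\ref{important lemma1}), and bound an unmonitored item's true frequency by $minCount$ via the contrapositive of Lemma~\ref{important lemma2}. This is essentially the same route the paper takes, and the one "delicate point" you flag is even slightly stronger than needed, since $error_x$ equals $minCount$ at the moment of assignment, which is already at most $I/k$ without appealing to monotonicity.
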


\textcolor{black}{Lemma~\ref{important lemma1} and Lemma~~\ref{important lemma2}, show that SpaveSaving with $\frac{1}{\epsilon}$ counters reports all items whose frequencies are larger than or equal to $\epsilon |F|_{1}$. Empirically, many papers have demonstrated that SpaceSaving outperforms other deterministic algorithms and it is considered to be the state of the art for finding frequent items~\cite{cormode2008finding, manerikar2009frequent}. Moreover, due to the superior performance of SpaceSaving, many works use it as a fundamental building block~\cite{sivaraman2017heavy,ting2018data,zakhary2020cot,basat2020designing,zhang2021cocosketch}}. Recently, a new randomized algorithm \textbf{BPtree} was proposed by Braverman et al.~\cite{braverman2017bptree} to solve the frequent items problem with $l_2$ guarantees in the insertion-only model using $O(\frac{1}{\epsilon^2}log\frac{1}{\epsilon})$ space.

\begin{algorithm}[]
\SetAlgoLined
    
    \For{ item from insertions}{
        \uIf{item $\in$ Sketch}{
            $count_{item}$ += 1 \;
        }
        \uElseIf{Sketch not full}{
            Sketch = Sketch $\cup$ item \;
            
            $count_{item}$ = 1 \;
            $error_{item}$ = 0 \;
        }
        \uElse{
            // Sketch is full\;
            $minItem$ = $min_{minItem \in Sketch}$ $count_{minItem}$\;
            $error_{item}$ = $count_{minItem}$ \;
            $count_{item}$ = $count_{minItem}$+1 \;
            Replace $(minItem, count_{minItem}, error_{minItem})$ by $(item, count_{item}, error_{item})$ 
            
        }
    }
 \caption{SpaceSaving Insert Algorithm}
 \label{Space Saving algorithm}
\end{algorithm}

\begin{algorithm}[]
\SetAlgoLined

        \uIf{$item$ $\in$ Sketch}{
            return $count_{item}$
        }
        return 0\;

 \caption{SpaceSaving Query(item)}
 \label{Space Saving query}
\end{algorithm}

\subsection{Algorithms in Turnstile Model}
In turnstile model, the stream consists of both insert and delete operations such that the deletes are always performed on previously inserted items. The data sketches for solving the frequency estimation problem in the turnstile model are known as \textbf{linear sketches}~\cite{cormode2008finding}. 
While the counter-based solutions solve both the frequency estimation and frequent items problems, the linear sketch solutions directly answer the frequency estimation problem but need additional information to solve the frequent items problem. In general, linear sketch algorithms assume the input comes from a bounded universe and assume the maximum entry in the frequency vector $F$ is $O(poly(U))$. When assuming a bounded universe, linear sketches can query all potential items in the universe to identify the frequent ones. 

In 1999, Alon et al.~\cite{alon1999space} proposed the randomized \textbf{AMS} sketch to approximate the second frequency moment. Charikar et~al.~\cite{charikar2002finding} improved upon the AMS sketch and proposed the \textbf{Count-Median} sketch, a randomized algorithm that summarizes the dataset. The Count-Median sketch provides an unbiased estimator and uses $O(\frac{1}{\epsilon}log\frac{1}{\delta})$ space to solve the $l_{1}$ frequency estimation problem and uses $O(\frac{1}{\epsilon^2}log\frac{1}{\delta})$ space to solve the $l_{2}$ frequency estimation problem. Later, Cormode and Muthukrishnan~\cite{cormode2005improved} proposed the \textbf{Count-Min} sketch that shares a similar algorithm and data structure as the Count-Median sketch. Count-Min sketch never underestimates frequencies, and uses $O(\frac{1}{\epsilon}log\frac{1}{\delta})$ space to solve the $l_{1}$ frequency estimation problem.

Although one may exhaustively iterate through the universe to find frequent items,
iterating through the universe can be slow and inefficient. As a result, Cormode and Muthukrishnan~\cite{cormode2005improved} suggested to imposes a hierarchical structure on the bounded universe, such that there are $log_{2}U$ layers and one Count-Min or Count-Median sketch per level and then use divide-and-conquer to search for the frequent items from the largest range to an individual item. 
The required space is $O(\frac{1}{\epsilon}log\frac{1}{\delta}logU)$ and update time is $O(log\frac{1}{\delta}logU)$. Dyadic interval is in the form of $[i2^j, (i+1)2^j-1]$ for $j \in log_{2}U$ and any constant $i$, such that any ranges can be decomposed into at most $log_{2}U$ disjoint dyadic ranges~\cite{cormode2019answering}. Dyadic intervals over a bounded universe can be integrated with frequency estimation sketches to solve the quantile approximation problem in the turnstile model~\cite{gilbert2002summarize, cormode2005improved, wang2013quantiles}


\subsection{Algorithms in Bounded Deletion Model}
In the bounded deletion model, the stream consists of both insert and delete operations and a constant $\alpha\geq1$ is given such that at most $(1-\frac{1}{\alpha})$ of prior insertions are deleted, i.e., $D \leq (1-\frac{1}{\alpha})I$, where $I$ is the number of insertions and $D$ is the number of deletions.
Jayaram et al.~\cite{jayaram2018data} proposed the CSSS (\textbf{C}ount-Median \textbf{S}ketch \textbf{S}ample \textbf{S}imulator) algorithm to solve the frequency estimation problem in the bounded deletion model. The Count-Median and Count-Min sketches require $O(\frac{1}{\epsilon}log\frac{1}{\delta})$ number of counters. Assuming $\delta = O(U^{-c})$ for some constant $c$ and the maximum entry of $F$ is $O(poly(U))$, then these two sketches require $O(\frac{1}{\epsilon}log^2(U))$ bits, which achieves the optimal lower bound in the turnstile model~\cite{jowhari2011tight}. Jayaram et al.~\cite{jayaram2018data} pointed out that in the bounded deletion model by simulating the Count-Median sketch on $poly(\alpha logU/\epsilon)$ uniformly sampled items from a stream and scaling the weights at the end, the $CSSS$ sketch can accurately approximate the true frequency of an item with high probability. Hence, by carefully tuning the size of the Count-Median sketch, CSSS requires $O(\frac{1}{\epsilon}log\frac{1}{\delta}log\frac{\alpha logU}{\epsilon})$ bits, improving the overall space compared to sketches in the turnstile model.

\subsection{Summary}

In Table \ref{tab:sketch-comparison}, we compare the differences and similarities among several different sketches for $l_1$ frequency estimation. These sketches can also solve $l_1$ heavy hitters, though some sketches may need additional modifications to the parameters or leverage external data structures. In Table \ref{tab:Table of Symbols}, we listed the important symbols used in the paper. 
Counter-based solutions have many advantages over linear sketches. Counter-based solutions are guaranteed to report all heavy items; they use $O(log\frac{1}{\epsilon})$ update time instead of $O(logU)$ update time where $\frac{1}{\epsilon}$ is often less than the universe size $U$; and they make no assumptions on the $universe$ and thus can be useful in Big Data applications where items are drawn from unbounded domains. In this paper, we present SpaceSaving$^\pm$, an optimal counter-based deterministic algorithm with $l_1$ guarantee to solve both the frequency estimation and frequent items problem in the bounded deletion model using $O(\frac{\alpha}{\epsilon})$ space. 

\section{The SpaceSaving$^\pm$ Algorithm}
\label{sec-alg}
In this section, we first show the space lower bound for solving the $\phi$ frequent items problem in the bounded deletion model. Then, we introduce the Lazy $SpaceSaving^\pm$ and $SpaceSaving^\pm$ algorithms with optimal space to solve both the frequency estimation and frequent items problems in the bounded deletion model \textcolor{black}{in which the total number of deletions ($D$) is less than $(1 - \frac{1}{\alpha})$ of the total insertions ($I$) where $\alpha \geq 1$}. \textcolor{black}{Given a user specified accuracy on the parameter $\epsilon$, a deterministic algorithm for frequency estimation and frequent items problems must:}
\begin{itemize}
    \item Approximate the frequency of all items $i$ with high accuracy such that $\forall~i: |f(i) - \hat{f}(i)|) \leq \epsilon |F|_{1}$; and 
    \item Report all the items with frequency greater than or equal to $\epsilon |F|_{1}$.
\end{itemize}
We propose Lazy SpaceSaving$^\pm$ and SpaceSaving $^\pm$. 
The main difference between the two variants of SpaceSaving$^\pm$ arises in the way deletions are handled. Since we assume the strict bounded deletion model, a delete operation must correspond to a previously inserted item. If the item is being tracked in the sketch, processing such a delete operation is straightforward since the count associated with the item can be decreased by 1. On the other hand, the challenge arises when the sketch maintenance algorithm encounters a delete of an item that is not being tracked in the sketch. We develop different ways of handling such a delete in the two algorithms and the resulting correctness guarantees.

\subsection{Space Lower Bound}

We first show that there is no \textcolor{black}{counter based algorithm that can solve the deterministic frequent items problem in the bounded deletion model using less than $\frac{\alpha}{\epsilon}$ counters.} 

\begin{theorem} 
In the bounded deletion model, any counter based algorithm needs at least $\frac{\alpha}{\epsilon}$ counters to solve the deterministic frequent items problem.
\end{theorem}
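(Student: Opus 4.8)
The plan is a direct adversary argument: for any counter‑based algorithm that keeps at most $k < \frac{\alpha}{\epsilon}$ $(item,count)$ pairs, I would build one bounded‑deletion stream on which it necessarily fails to report a frequent item. The guiding intuition is that the bounded‑deletion constraint $D \le (1-\frac1\alpha)I$ only forces $|F|_1 = I - D \ge I/\alpha$, so an adversary who spends the entire deletion budget can drive the final $|F|_1$ down to a factor $\alpha$ below the number of insertions; a single unit of surviving frequency can then be ``frequent'' even though $\frac{\alpha}{\epsilon}$ distinct items have been inserted --- more distinct items than a sketch with $k$ counters can track.

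Concretely I would use two phases. Phase~1 inserts $\frac{\alpha}{\epsilon}$ pairwise‑distinct items, each exactly once, so $I = \frac{\alpha}{\epsilon}$ and (momentarily) $|F|_1 = \frac{\alpha}{\epsilon}$. Because a counter‑based sketch with $k$ counters tracks at most $k$ distinct items and $k < \frac{\alpha}{\epsilon}$, after Phase~1 at least one inserted item $x_j$ is not monitored; since the algorithm is deterministic and the Phase~1 stream is fixed, the adversary can compute the state and name such an $x_j$. Phase~2 then deletes, in full, $\frac{\alpha-1}{\epsilon}$ of the other items. This is feasible because there are $\frac{\alpha}{\epsilon}-1 \ge \frac{\alpha-1}{\epsilon}$ such items (the inequality is just $\frac1\epsilon \ge 1$), and it only ever deletes occurrences that are present. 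After Phase~2 we have $D = \frac{\alpha-1}{\epsilon} = (1-\frac1\alpha)I$, so the bounded‑deletion constraint holds with equality, and the final frequency vector is supported on $x_j$ together with $\frac1\epsilon - 1$ survivors, each of frequency $1$, so $|F|_1 = \frac1\epsilon$.

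To finish I would observe that the frequent‑items threshold is now $\epsilon|F|_1 = 1$, so $x_j$, with frequency $1$, is a frequent item that must be reported. But $x_j$ was unmonitored at the end of Phase~1, and Phase~2 consists solely of deletions of items other than $x_j$; using the defining property of a counter‑based algorithm --- its set of monitored items is always a subset of the previously monitored items together with the items that have appeared in the stream, and it can only report monitored items --- $x_j$ is still unmonitored and hence not reported. This contradicts correctness, so $k \ge \frac{\alpha}{\epsilon}$.

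The delicate point, and the reason the bound is exactly $\frac{\alpha}{\epsilon}$, is the calibration in Phase~2: the number of deletions must be large enough to shrink $|F|_1$ to $\frac1\epsilon$ (so that a residual count of $1$ clears the threshold $\epsilon|F|_1$) yet small enough to respect the budget $(1-\frac1\alpha)I$, and both requirements meet precisely at $\frac{\alpha-1}{\epsilon}$ deletions. The remaining care is bookkeeping: one should make explicit the minimal structural assumption on ``counter‑based'' algorithms used above, and handle a non‑integral $\frac1\epsilon$ by replacing ``each once'' in Phase~1 with ``each $c$ times'' for a suitable integer $c$ (which rescales every count and deletion count by $c$ and leaves the argument intact), reading the conclusion with a ceiling where needed.
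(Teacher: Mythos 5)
Your proposal is correct and follows essentially the same adversary argument as the paper: insert $\frac{\alpha}{\epsilon}$ equally frequent distinct items, use the pigeonhole principle to find one the $k<\frac{\alpha}{\epsilon}$-counter sketch fails to monitor, then spend the full deletion budget $D=(1-\frac{1}{\alpha})I$ on other items so that $|F|_1$ drops to $I/\alpha$ and the unmonitored item clears the threshold $\epsilon|F|_1$. Your version merely instantiates the paper's construction with unit frequencies ($I=\frac{\alpha}{\epsilon}$) and is somewhat more careful about integrality and the structural assumption on counter-based algorithms.
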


\begin{proof} By Contradiction.

\textcolor{black}{Assume that there exists a counter based deterministic solution using $k<\frac{\alpha}{\epsilon}$ counters that can report all the items with frequency larger than or equal to $\epsilon|F|_1$. Consider a stream $\sigma$ with bounded deletions that contains $I$ insertions and $D$ deletions where all insertions come before any deletions. Let the $I$ insertions consist exactly of $\frac{\alpha}{\epsilon}$ unique items, each with an exact count of $\frac{\epsilon}{\alpha}I$. After processing all insertions, the optimal algorithm with $k<\frac{\alpha}{\epsilon}$ counters will monitor at most $k$ unique items, and there would be at least one item from the insertions that is left out. Let the set $Missing$ contains all such unique items that appeared in $I$ but are not monitored by the optimal algorithm. Now let the $D = (1-\frac{1}{\alpha})I$ deletions be applied arbitrarily on the monitored items. After all $D$ deletions, all items in $Missing$ have frequency of $\frac{\epsilon}{\alpha}I$ in which $\frac{\epsilon}{\alpha}I = \epsilon (I-D) \geq \epsilon|F|_1$, and these items are frequent and must be monitored by the optimal algorithm. However, the sketch, with space $k$, after processing all insertions loses the information regarding $Missing$. Therefore, it is not possible to use less than $\frac{\alpha}{\epsilon}$ counters to solve the deterministic frequent items problem in the bounded deletion model.}
\end{proof}

\subsection{Lazy SpaceSaving$^\pm$ Approach}
Since supporting both insertions and bounded deletions is a much harder task compared to only allowing for insertions, the overall space bound needs to be increased. From the previous section, we can see that if the goal is to report all the items with frequency more than $\epsilon|F|_{1}$ times, where $|F|_{1} = I - D$, we need to track more items. Since before any deletions, the sketch has no knowledge regarding which items are going to be deleted, then all elements with frequency higher than $\epsilon (I-D)$ are potential candidates before any deletions. We at least need an algorithm that can identify these potential candidate items.

\textcolor{black}{By Lemma~\ref{important lemma1} and Lemma~\ref{important lemma2},} SpaceSaving~\cite{metwally2005efficient} with space $k$ reports all the items with frequency greater than or equal to $\frac{I}{k}$. Therefore by using $k=\frac{\alpha}{\epsilon}$ space to process $I$ insertions on the SpaceSaving algorithm, it will report all item with frequency greater than or equal to $\frac{\epsilon}{\alpha}I$. Since we know $\frac{1}{\alpha} \leq \frac{(I-D)}{I}$, $\frac{\epsilon}{\alpha}I \leq \epsilon I \frac{(I-D)}{I} = \epsilon(I-D)$. Hence by using $\frac{\alpha}{\epsilon}$ counters, all the items with frequency greater than or equal to $\epsilon(I-D)$ will be identified.

Interestingly, we find that modifying the original SpaceSaving algorithm with $O(\frac{\alpha}{\epsilon})$ space leads to an algorithm that solves the frequency estimation and frequent items problems in the bounded deletion model. The Lazy SpaceSaving$^\pm$ algorithm handles insertions exactly in the same manner as in the original Algorithm~\ref{Space Saving algorithm}. For deletions, the Lazy SpaceSaving$^\pm$ decreases the monitored item counter, if the deleted item is monitored. Otherwise, the deletions on unmonitored item are ignored, as shown in Algorithm~\ref{Lazy Space Saving}. \textcolor{black}{The frequency is still estimated according to Algorithm~\ref{Space Saving query}.} The rationale for this design is that an unmonitored item has estimated frequency of 0 and deletions of the unmonitored items will not amplify the difference but in fact narrows the difference. Initially, this may seem to be counter-intuitive. Another way to think about it is that the frequency estimations of the unmonitored items can only be underestimates. Thus, the decrease in an unmonitored item's true frequency narrows the underestimation.
\begin{algorithm}[]
\SetAlgoLined
    \For{item from deletions}{
        \uIf{item in Sketch}{
            $count_{item}$ -= 1 \;
        }
        \uElse{
            //ignore
        }
    }
 \caption{Lazy SpaceSaving$^\pm$: Deletion Handling}
 \label{Lazy Space Saving}
\end{algorithm}

We now formally establish that Algorithm~\ref{Lazy Space Saving} solves the frequency estimation problem in the bounded deletion model. \textcolor{black}{Let $error_{max}$ be the maximum frequency estimation error based on the sketch. We show by induction that $error_{max}$ is always less than $\epsilon(I-D)$.}
\begin{theorem} \label{theorem: lazy frequency estimation}
In the bounded deletion model where $D\leq(1-\frac{1}{\alpha})I$, after processing $I$ insertions and D $deletions$, Lazy SpaceSaving$^{\pm}$ using $O(\frac{\alpha}{\epsilon})$ space solves the frequency estimation problem in which $\forall i, |f(i) - \hat{f}(i)| < \epsilon(I-D)$ where $f(i)$ and $\hat{f}(i)$ are the exact and estimated frequencies of an item $i$.
\end{theorem}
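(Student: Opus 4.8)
The plan is to prove $error_{max} < \epsilon(I-D)$ by induction on the number of operations processed, tracking both $I$ and $D$ (equivalently $|F|_1 = I-D$) as the stream evolves. The key quantity is the maximum additive error of any estimate, and I need to handle two regimes: items monitored by the sketch and items absent from the sketch. For a monitored item, Lemma~\ref{lemma: spacesaving overestimation} (never underestimate) together with the $error$ bookkeeping gives $0 \le \hat{f}(i) - f(i) \le error_i \le minCount$; for an unmonitored item, $\hat{f}(i)=0$ and $f(i) \ge 0$, and I will argue $f(i) \le minCount$ as well using Lemma~\ref{important lemma2} (all items with frequency $\ge minCount$ are in the sketch — here applied to the current state, noting that deletions only decrease true frequencies). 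So in both cases the error is bounded by the current $minCount$, and the whole theorem reduces to showing $minCount < \epsilon(I-D)$ at every point in the stream.

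Next I would establish $minCount < \epsilon(I-D)$. The clean way: observe that in Lazy SpaceSaving$^\pm$ with $k=\frac{\alpha}{\epsilon}$ counters, an insertion always increments exactly one counter by $1$ (as in vanilla SpaceSaving), while a deletion either decrements a monitored counter by $1$ or is ignored. Let $S$ denote the sum of all counts currently in the sketch. Then $S \le I$ always (the sum starts at $0$, rises by $1$ per insertion, and never decreases beyond that), and in fact when the sketch is full the average count is $S/k \le I/k = \frac{\epsilon}{\alpha}I$, so $minCount \le \frac{\epsilon}{\alpha}I$. Using the bounded-deletion hypothesis $D \le (1-\frac1\alpha)I$, i.e. $\frac1\alpha I \le I - D$, we get $minCount \le \frac{\epsilon}{\alpha}I = \epsilon \cdot \frac{I}{\alpha} \le \epsilon(I-D)$. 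The strict inequality (rather than $\le$) comes from the SpaceSaving invariant $minCount < \epsilon I$ when using more counters than needed, or from handling the edge case where the sketch is not yet full separately (then $minCount$ is either $0$ or bounded by the number of insertions seen, still strictly below the target whenever $I-D>0$; the degenerate empty-stream case is vacuous).

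I would present this as an induction over operations to make the "at every point in the stream" claim rigorous: the base case (empty sketch) is trivial; for the inductive step, an insertion increases $I$ by $1$, so $\epsilon(I-D)$ grows, while $minCount$ grows by at most... — actually the cleaner argument avoids step-by-step $minCount$ tracking and just invokes the two structural facts $S \le I$ and $minCount \le S/k$ directly at the query time, combined with the bounded-deletion inequality; the induction is then only needed to justify Lemma~\ref{important lemma2}-style reasoning under deletions. The main obstacle I anticipate is the unmonitored-item case: I must be careful that $f(i) \le minCount$ still holds after a mix of insertions and ignored deletions, since the original SpaceSaving lemmas are stated for insertion-only streams. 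The argument is that ignoring a deletion of an unmonitored item only decreases that item's true frequency while leaving the sketch (and hence $minCount$) unchanged, so the insertion-only invariant "every item with frequency $\ge minCount$ is monitored" is preserved; a deletion of a monitored item decreases both a count and possibly $minCount$ but cannot make an unmonitored item's frequency exceed the new $minCount$. Making this monotonicity precise, and confirming it interacts correctly with counter replacements that happen on later insertions, is the delicate part; everything else is routine arithmetic with the $D \le (1-\frac1\alpha)I$ constraint.
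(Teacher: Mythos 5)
Your overall architecture (induction over operations, with the bound ultimately coming from $minCount \leq \frac{1}{k}\sum counts \leq \frac{I}{k} = \frac{\epsilon}{\alpha}I \leq \epsilon(I-D)$) matches the paper's proof. However, the central reduction you propose --- that every item's error is bounded by the \emph{current} $minCount$ --- is false in Lazy SpaceSaving$^\pm$, and the specific claim you use to defend it (``a deletion of a monitored item decreases both a count and possibly $minCount$ but cannot make an unmonitored item's frequency exceed the new $minCount$'') is exactly where it breaks. Deletions of monitored items drive $minCount$ down while leaving both the stored $error_i$ values and the true frequencies of unmonitored items untouched. Concretely: let $B$ enter a full sketch by evicting an item with $minCount=1$, so $count_B=2$, $error_B=1$; now delete $B$ twice. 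Then $minCount=0$, yet $error_B=1>minCount$, and the evicted item $C$ still has true frequency $1\geq minCount=0$ while being unmonitored --- so both halves of your invariant ($error_i\leq minCount$ for monitored items, and Lemma~\ref{important lemma2} applied to the current state for unmonitored items) fail simultaneously. The theorem itself is unharmed, but your proof as written would not go through.

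The repair, which is what the paper's induction actually does, is to bound each item's error by the value of $minCount$ \emph{at the moment the error was created} (the replacement step), rather than at query time: at a replacement occurring after $i$ insertions, $minCount\leq \frac{\epsilon}{\alpha}i\leq\epsilon(I-D)$ bounds both the overestimate of the incoming item and the underestimate of the evicted one, and no subsequent operation --- matched insertions/deletions of monitored items, or ignored deletions of unmonitored items --- ever increases any item's error. So the induction invariant should be ``$error_{max}<\epsilon(I-D)$'' directly, maintained per operation, not ``every error is at most the current $minCount$.'' Your ``cleaner argument'' that invokes $S\leq I$ only at query time inherits the same flaw, since the errors you need to bound were locked in at earlier times when $minCount$ may have been larger than it is at query time (though still at most $\frac{\epsilon}{\alpha}I$, which is why the arithmetic survives once the invariant is restated correctly).
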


\begin{proof} By Induction.

\textit{Base case:} After $i < I$ insertions and 0 deletions with $O(\frac{\alpha}{\epsilon})$ space, $error_{max}$ is less than $\epsilon(I-D)$. Hence, by Lemma~\ref{lemma error bound} (of the original insertion-only $SpaceSaving$), $error_{max} < i\frac{\epsilon}{\alpha} \leq \epsilon \frac{i(I-D)}{I} < \epsilon(I-D)$.

\textit{Induction hypothesis:} After $i < I$ insertions and $d < D$ deletions, the maximum frequency estimation error of the sketch is $error_{max} < \epsilon(I-D)$.

\textit{Induction Step:} Consider the case when the $(i+d+1)^{th}$ input item is an insertion. If the newly inserted item $x$ is monitored or the sketch is not full, then no error is introduced. If the newly inserted item $x$ is not monitored and the sketch is full, then $x$ replaces the $minItem$ which is the item with minimum count, $minCount$, in all monitored items. \textcolor{black}{$minCount$ is maximized when every item inside the sketch has the same count, and hence $minCount \leq i\frac{\epsilon}{\alpha}<\epsilon(I-D)$. The estimated frequency for $x$ is $minCount$+1 and $x$ is at most overestimated by $minCount$. The frequency estimation for $minItem$ becomes 0, and $minItem$'s frequency estimation is off by at most $minCount$.} Therefore, $error_{max}$ after processing the newly inserted item is still less than $\epsilon(I-D)$. 

 Consider the case when the $(i+d+1)^{th}$ input item is a deletion. If the newly deleted item $x$ is monitored, its corresponding counter will be decremented and no extra error is introduced and $error_{max}$ is still less than $\epsilon(I-D)$.
 If the newly deleted item $x$ is not monitored, then the algorithm ignores this deletion. \textcolor{black}{The frequency estimation errors for monitored items do not change and they are still less than $\epsilon(I-D)$.} Moreover, before the arrival of $x$, $\forall i \notin Sketch, f(i)-\hat{f}(i) = f(i) - 0 < \epsilon(I-D)$. By ignoring the deletion of the unmonitored items, $\forall i \notin Sketch, (f(i)-1)-\hat{f}(i) < f(i) - \hat{f}(i) < \epsilon(I-D)$. 

\textit{Conclusion:} By the principle of induction, Lazy SpaceSaving$^\pm$ using $O(\frac{\alpha}{\epsilon})$ space solves the frequency estimation problem with bounded error, i.e, $\forall i, |f(i) - \hat{f}(i)| < \epsilon(I-D)$.
\end{proof}

\textcolor{black}{Lazy SpaceSaving$^\pm$ also solves the frequent items problem. To prove this, we first show Lazy SpaceSaving$^\pm$ never underestimates the frequency of a monitored item.} 

\begin{lemma} \label{lemma: lazy nvr underestimate}
\textcolor{black}{Lazy SpaceSaving$^\pm$ never underestimates the frequency of monitored items.}
\end{lemma}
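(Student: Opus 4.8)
The plan is to prove by induction on the number of processed operations that for every monitored item $x$, the invariant $\hat{f}(x) \geq f(x)$ holds, where $\hat{f}(x) = count_x$ is the stored count. Since this lemma is the natural extension of Lemma~\ref{lemma: spacesaving overestimation} to the bounded deletion setting, I would reuse that result as the base of the argument: in a pure insertion stream Lazy SpaceSaving$^\pm$ behaves exactly like the original SpaceSaving (Algorithm~\ref{Space Saving algorithm}), so the invariant is already established whenever no deletion has occurred. The work is therefore entirely in checking that no deletion operation can break the invariant.

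First I would set up the induction carefully. The \emph{base case} is the empty stream (or, more conveniently, any prefix consisting only of insertions), handled by Lemma~\ref{lemma: spacesaving overestimation}. For the \emph{induction step} I would take a prefix on which the invariant holds and case on the type of the next operation. \textbf{Insertions} are dispatched exactly as in the original proof: if $x$ is monitored, both $count_x$ and $f(x)$ increase by $1$; if $x$ is unmonitored and the sketch is not full, $x$ enters with $count_x = 1 = f(x)$ at this point (noting $f(x)$ counts only occurrences so far); if $x$ is unmonitored and the sketch is full, $x$ replaces $minItem$ with $count_x = minCount + 1 \geq f(x)$ since $f(x) \leq minCount + 1$ (indeed $x$'s true frequency among items seen can be bounded using that $minCount$ is the smallest stored count and each stored count overestimates its item's frequency — the standard SpaceSaving argument), and the displaced $minItem$ is no longer monitored so its invariant obligation vanishes. \textbf{Deletions} are the new case and split in two: if the deleted item $x$ is monitored, then $count_x$ decreases by $1$ and $f(x)$ decreases by $1$, so $\hat{f}(x) \geq f(x)$ is preserved for $x$ and untouched for every other monitored item; if $x$ is unmonitored, Algorithm~\ref{Lazy Space Saving} does nothing, so no stored count changes and no monitored item's true frequency changes (the true frequency that decreased belongs to an unmonitored item), hence the invariant is trivially maintained for all monitored items.

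The one subtlety — and the place I expect to spend the most care rather than the most effort — is the claim that a newly inserted, previously unmonitored item $x$ that evicts $minItem$ satisfies $f(x) \leq minCount + 1$ at the moment of insertion. In the insertion-only world this is immediate because $minCount$ is at least the average stored count and every prior occurrence of $x$ that failed to keep $x$ monitored contributed to raising the counts; with interleaved deletions one must argue that decrements (which only ever hit monitored items) cannot have pushed $minCount$ below $f(x) - 1$ in a way that matters, because any time $x$ was previously evicted its count at that time was already $\leq$ the then-current minimum. A clean way to finish is to observe that $\hat f(x)$ and $f(x)$ start equal and that every subsequent operation changes the two by the same amount except (i) an eviction-insertion of $x$, which moves $\hat f(x)$ up by $minCount+1$ while $f(x)$ moves up by $1$ (a net gain for $\hat f - f$ of $minCount \geq 0$), and (ii) an eviction of $x$, which sets $\hat f(x)=0$ but, by the overestimate invariant applied just before eviction, only after $\hat f(x) \geq f(x)$, so $f(x)$ is "forgiven" down to an effective $0$ as well for bookkeeping purposes; ignored deletions of $x$ only decrease $f(x)$. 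Tracking the quantity $\hat f(x) - f(x)$ through these cases and showing it never goes negative gives the lemma.
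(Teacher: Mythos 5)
You have correctly isolated the crux---the eviction/re-insertion case---but your resolution of it does not go through. Your bookkeeping in case (ii) treats an eviction of $x$ as ``forgiving'' $f(x)$ down to an effective $0$, so that a later re-insertion, which raises $\hat f(x)$ to $minCount+1$ while $f(x)$ rises by $1$, looks like a net gain for $\hat f(x)-f(x)$. But the inequality you actually need at the moment of re-insertion is $f(x)\le minCount+1$ with $f(x)$ the true frequency over the whole stream so far; nothing in the algorithm forgives the earlier occurrences of $x$. In insertion-only SpaceSaving this inequality holds because $minCount$ is non-decreasing, so at re-entry it dominates $x$'s counter value at the time of its eviction, which by induction dominated $f(x)$, and $f(x)$ cannot grow while $x$ is unmonitored. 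In Lazy SpaceSaving$^\pm$ that monotonicity is exactly what fails: deletions of \emph{monitored} items decrement counters and can drive $minCount$ below $x$'s count at eviction. Concretely, with capacity $2$ (the same capacity as the paper's own illustration) run the stream $x,x,x,A,A,A,A,B,-A,-A,-A,x$ (nine insertions, three deletions, so the bounded-deletion constraint holds for any $\alpha\ge 1.5$): $B$ evicts $x$ when $count_{x}=3$, the three deletions bring $A$'s counter from $4$ down to $1$, and the final insertion of $x$ evicts $A$ and sets $count_{x}=2$ while $f(x)=4$---a monitored item whose frequency is underestimated. The construction scales to any capacity by padding with additional heavy items.

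So the step ``$f(x)\le minCount+1$ at the moment of insertion'' is not merely subtle; it fails in general once deletions are interleaved, and tracking $\hat f(x)-f(x)$ with the proposed forgiveness rule cannot rescue it. For comparison, the paper's own proof is far blunter at this point: it asserts that insertions are handled exactly as in SpaceSaving and invokes Lemma~\ref{lemma: spacesaving overestimation}, then checks only the two deletion cases; it never engages with the interaction you spotted, namely that deletions of monitored items invalidate the insertion-only invariants on which Lemma~\ref{lemma: spacesaving overestimation} rests. Your write-up probes deeper than the paper's argument, but as a proof it still has a genuine gap at precisely this step, and closing it would require either an additional invariant relating $minCount$ to the true frequencies of evicted items (which the counterexample shows cannot hold unconditionally) or a weakened form of the statement.
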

\begin{proof}
\textcolor{black}{Since the handling of insertions is the same as the SpaceSaving and SpaceSaving never underestimates the frequency of monitored items by Lemma~\ref{lemma: spacesaving overestimation}, it is clear that the insertions can not lead to frequency underestimation for monitored items. When handling deletions, Lazy SpaceSaving$^\pm$ only decrements the count when the deleted item is monitored. Since the deletion of a monitored item implies its true frequency and its estimated frequency both decrease by one, this procedure has no effect on the frequency estimation error. Therefore, Lazy SpaceSaving$^\pm$ never underestimates the frequency of monitored items.}
\end{proof}


Since Lazy SpaceSaving$^\pm$ never underestimates, then if we report all the items with frequency estimations greater or equal to $\epsilon (I-D)$, then all frequent items will be reported. 
 
\begin{theorem}
In the bounded deletion model, where $D\leq(1-\frac{1}{\alpha})I$, Lazy SpaceSaving$^{\pm}$ solves the frequent items problem using $O(\frac{\alpha}{\epsilon})$ space.
\end{theorem}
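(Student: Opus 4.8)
The plan is to reduce this to the two facts already established for Lazy SpaceSaving$^\pm$: the additive frequency-estimation bound of Theorem~\ref{theorem: lazy frequency estimation} and the one-sided (never-underestimate) property of Lemma~\ref{lemma: lazy nvr underestimate}. The reporting rule is the natural one: after processing the stream, output every item stored in the sketch whose recorded count is at least $\epsilon(I-D) = \epsilon|F|_1$. The value $I-D$ can be maintained with two extra counters, so this costs nothing beyond the $O(\frac{\alpha}{\epsilon})$ words the algorithm already uses.

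The first step is to argue that every heavy item is monitored at the end. Suppose $i$ has $f(i) \geq \epsilon|F|_1 = \epsilon(I-D)$ but $i$ is not in the sketch. Then $\hat f(i) = 0$ by the query rule (Algorithm~\ref{Space Saving query}), so $|f(i) - \hat f(i)| = f(i) \geq \epsilon(I-D)$, contradicting the strict bound $|f(i)-\hat f(i)| < \epsilon(I-D)$ of Theorem~\ref{theorem: lazy frequency estimation}. Hence every heavy item is present in the sketch.

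The second step handles such a monitored heavy item $i$: Lemma~\ref{lemma: lazy nvr underestimate} gives $\hat f(i) \geq f(i) \geq \epsilon(I-D)$, so its recorded count clears the reporting threshold and $i$ is output. Since the definition of the $\phi$ frequent items problem only requires returning all heavy items (light items may also be reported), this establishes correctness, and together with the $O(\frac{\alpha}{\epsilon})$ counters used by the algorithm, the theorem follows.

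I do not expect a genuine obstacle here, since the substantive work was already done in Theorem~\ref{theorem: lazy frequency estimation} and Lemma~\ref{lemma: lazy nvr underestimate}. The only points needing care are that the inequality in the estimation theorem is strict, so the contrapositive really does rule out the boundary case $f(i) = \epsilon(I-D)$, and that the threshold used for reporting is $\epsilon(I-D)$ rather than $\epsilon I$. An alternative self-contained route would instead invoke Lemma~\ref{important lemma1} and Lemma~\ref{important lemma2} on the insertion subsequence to show that any item with frequency at least $\frac{\epsilon}{\alpha}I \geq \epsilon(I-D)$ is retained by the SpaceSaving insertion logic and never evicted thereafter, but routing through the already-proved frequency-estimation theorem is shorter.
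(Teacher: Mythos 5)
Your proposal is correct and follows essentially the same route as the paper: both rest on exactly two ingredients, the strict additive bound of Theorem~\ref{theorem: lazy frequency estimation} (to show an unmonitored heavy item would violate the error bound) and Lemma~\ref{lemma: lazy nvr underestimate} (to show a monitored heavy item's count clears the $\epsilon(I-D)$ reporting threshold). The only difference is presentational — you argue directly while the paper phrases the same two cases as a proof by contradiction.
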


\begin{proof} By Contradiction.

 \textcolor{black}{Assume a frequent item $x$ is not reported and by definition of frequent items, $f(x) \geq \epsilon (I-D)$. Since it is not reported, its frequency estimation, $\hat{f}(x)$, must be less than $\epsilon (I-D)$. There are two cases where $x$ will not be reported: (i) $x$ is not monitored, or (ii) $x$ is monitored, but its frequency is underestimated, i.e., $\hat{f}(x)<\epsilon(I-D)$.}
 
 \textcolor{black}{In the first case where $x$ is not monitored, the estimation frequency of $x$ is 0, i.e, $\hat{f}(x)=0$. Since $x$ is by assumption a frequent item, the frequency estimation difference for $x$ is $|\hat{f}(x) - f(x)| \geq  \epsilon (I-D)$. However, this contradicts Theorem~\ref{theorem: lazy frequency estimation} in which any items' frequency estimation error is strictly less than $\epsilon(I-D)$.}
 
  \textcolor{black}{In the second case where $x$ is monitored but not reported, its estimated frequency is less than $\epsilon(I-D)$, i.e., the frequency estimation for item $x$ is an underestimation. However, by Lemma~\ref{lemma: lazy nvr underestimate}, Lazy SpaceSaving$^\pm$ never underestimates the frequency of monitored items.}
  
  \textcolor{black}{Hence, by contradiction Lazy SpaceSaving$^\pm$ solves the deterministic frequent items problem.}
\end{proof}

\subsection{An illustration of Lazy SpaceSaving$^\pm$} \label{lazy example}

\begin{figure}[tbph]
\centering
\includegraphics[scale=0.2]{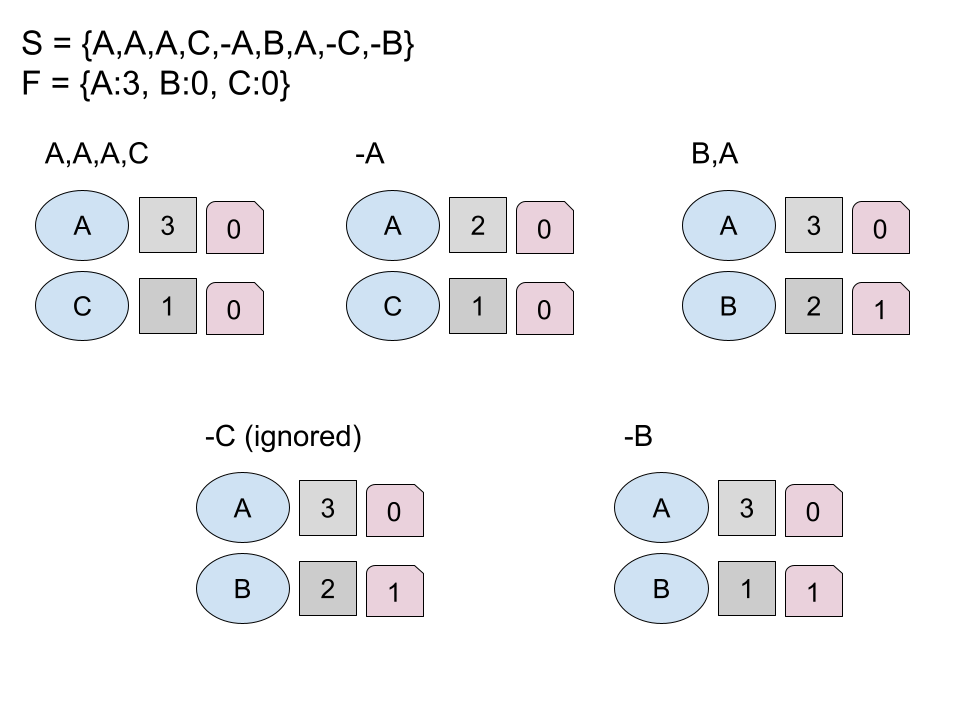}
\caption{Input Stream consisting of 6 insertions and 3 deletions. Each tuple represents an item, estimated frequency, and estimation error.}
\label{fig:lazy approach example}
\end{figure}

\textcolor{black}{Consider an instance of Lazy SpaceSaving$^\pm$ with capacity of 2. The input stream $\sigma$ is $(A,A,A,C,-A,B,A,-C,-B)$ where the minus sign indicate a deletion. The corresponding true frequency of $A$ is 3 while the true frequency of all other items is 0. For the first four insertions and one deletion of the monitored item $A$, the sketch maintains the exact count with no errors. When the sixth item $B$ arrives, $B$ replaces item $C$, since $C$ has the minimum count. The following insertion is $A$ and since $A$ is monitored, $A$'s count increases. Then items $-C,-B$ arrive. Since $C$ is not monitored, Lazy SpaceSaving$^\pm$ ignores the deletion of $C$, and the deletion of monitored item $B$ decreases the corresponding count, as shown in Figure~\ref{fig:lazy approach example}. After processing all inputs, the lazy-approach does not underestimate the frequency of the items in the sketch (it overestimates the frequency of item $B$). The maximum frequency estimation error is 1 since $\hat{f}(A)-f(A) = 0$ and $\hat{f}(B)-f(B) = 1$.}  


\subsection{SpaceSaving$^\pm$}
\textcolor{black}{While Lazy SpaceSaving$^\pm$ elegantly satisfies all the necessary requirements, the average estimation error and total estimation error may increase if there are significant deletions of the unmonitored items. Therefore, we propose SpaceSaving$^\pm$, a novel algorithm and a data structure that efficiently handles deletions of the unmonitored items. Interestingly, we experimentally show that SpaceSaving$^\pm$ performs better than Lazy SpaceSaving$^\pm$ when they are both allocated the same sketch space, even though we need more space by a constant factor to establish the correctness of SpaceSaving$^\pm$.}

Both the original SpaceSaving and our proposed Lazy SpaceSaving$^\pm$ algorithms have the property of never underestimating the frequency of the monitored items. Since the $\epsilon$-approximation requirement is $\forall i, |f(i) - \hat{f}(i)| < \epsilon(I-D)$, \textcolor{black}{there are opportunities to reduce the amount of overestimation for the monitored items, as long as the frequency estimation error is still within this bound. We observe that an item with a large estimation error indicates that it is unlikely to be a heavy item, as heavy items are often never evicted from the sketch and have small estimation error. In addition, items with large estimation error are often overestimated due to the aggregation of the frequencies of many less-weighted items.} SpaceSaving$^\pm$ leverages this intuition. It handles the insertions of all items, and the deletions of the monitored items exactly in the same way as the Lazy SpaceSaving$^\pm$. For the deletions of the unmonitored items, SpaceSaving$^\pm$ decrements the count associated with the item that has the maximum estimation error inside the sketch, as shown in Algorithm~\ref{bounded deletion Space Saving}. \textcolor{black}{With this modification, With this modification, the estimated frequency of any item reduces either from being replaced or from a deletion of an unmonitored item. In the following proofs, SpaceSaving$^\pm$ uses $\frac{2\alpha}{\epsilon}$ to ensure (i) no item can be severely overestimated by SpaceSaving$^\pm$, and (ii) no item can be severely underestimated by SpaceSaving$^\pm$. To estimate the frequency of an item, we still use Algorithm~\ref{Space Saving query}. Before analyzing the correctness of the algorithm, we first construct three very helpful lemmas.}

\begin{lemma} \label{lemma: upperbound on minCount in Bounded Deletion Model}
\textcolor{black}{The minimum count, $minCount$, in SpaceSaving$^\pm$ with $\frac{2\alpha}{\epsilon}$ counters is less than or equal to $\frac{\epsilon}{2}(I-D)$.}
\end{lemma}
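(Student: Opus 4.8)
The plan is to avoid an induction on the error and instead use a single global invariant: after processing any prefix of the stream consisting of $I$ insertions and $D$ deletions, the sum of the $\frac{2\alpha}{\epsilon}$ counters maintained by SpaceSaving$^\pm$ equals $I-D=|F|_1$. Granting this, the lemma is immediate, since the minimum of $k=\frac{2\alpha}{\epsilon}$ numbers is at most their average, so
\[
minCount \le \frac{I-D}{k} = \frac{\epsilon(I-D)}{2\alpha} \le \frac{\epsilon}{2}(I-D),
\]
where the last inequality uses $\alpha\ge 1$.

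I would prove the invariant by induction on stream length, tracking the effect of each update on $\sum_i count_i$. Insertions behave exactly as in the original SpaceSaving, each contributing $+1$: if the arriving item is monitored its counter increments; if it is unmonitored with a free slot a new counter is created at value $1$; and if it is unmonitored with the sketch full, $(minItem,minCount,\cdot)$ is overwritten by $(newItem,minCount+1,\cdot)$, a net $+1$. Deletions contribute $-1$: SpaceSaving$^\pm$ decrements the deleted item's counter when it is monitored, and otherwise decrements the counter of the maximum-error entry; either way the total drops by exactly one. Summing over the stream yields $\sum_i count_i = I-D$, which combined with the averaging bound above finishes the argument.

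The part that needs care --- and the main obstacle --- is not this arithmetic but the two structural facts it quietly relies on: that the average is over exactly $k$ counters, and that counters never go negative (so ``min $\le$ average'' is applied to a genuine reachable state of the sketch). For the first, once the sketch fills up it stays full, since neither a replacement nor a decrement vacates an entry, so the average is over exactly $k$ values; while the sketch is not yet full there is an empty slot and the bound holds trivially with $minCount=0$. For nonnegativity, an unmonitored deletion must decrement a strictly positive counter; here one uses that $I-D\ge I/\alpha>0$ in the bounded-deletion model, so the counters are not all zero, together with a short argument (or a tie-breaking convention selecting a positive counter among those of maximal error) showing that the decremented counter is positive. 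Once these bookkeeping points are settled, the lemma follows.
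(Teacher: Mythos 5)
Your proof is correct, but it takes a different route from the paper's. The paper argues by a worst-case reduction: since deletions never increment any counter, $minCount$ is maximized on the insertion-only stream, where the counters sum to $I$; hence $minCount \le \frac{\epsilon I}{2\alpha}$, and the bounded-deletion constraint $I \le \alpha(I-D)$ is then invoked to convert this into $\frac{\epsilon}{2}(I-D)$. You instead track the actual run via the exact conservation law $\sum_i count_i = I-D$ (every insertion adds $1$ in all three cases, every deletion subtracts $1$ whether the deleted item is monitored or not), and conclude $minCount \le \frac{I-D}{k} = \frac{\epsilon(I-D)}{2\alpha} \le \frac{\epsilon}{2}(I-D)$ using only $\alpha\ge 1$. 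This buys two things: you avoid the paper's counterfactual claim that removing the deletions can only increase $minCount$ (which, taken literally, compares two runs whose states diverge and is not entirely immediate), and you get a slightly stronger bound, $\frac{\epsilon(I-D)}{2\alpha}$, without needing the constraint $D\le(1-\frac{1}{\alpha})I$ at this step. Your bookkeeping about the sketch staying full (and counting empty slots as zero before it fills) matches the paper's implicit convention. One small remark: the nonnegativity of counters that you flag as the main obstacle is not actually needed for this lemma --- the inequality ``minimum $\le$ average'' holds for arbitrary reals, and the sum invariant is pure $\pm 1$ arithmetic independent of sign --- so that discussion (and any tie-breaking convention) can be dropped here; nonnegativity only matters for other lemmas, such as the lower bound on the maximum estimation error.
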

\begin{proof}
\textcolor{black}{Since deletions never increment any counts, $minCount$ is maximized by processing $I$ insertions with no deletions. With $I$ insertions and no deletions, the sum of all the counts is equal to $I$. The $minCount$ is the largest when all the other counts are the same as $minCount$. Hence, $minCount \leq \frac{\epsilon I}{2 \alpha} \leq \frac{\epsilon(I-D)}{2}$.}
\end{proof}

\begin{lemma} \label{lemma: max estimation error upper bound}
\label{lemma: SpaceSaving+- estimation error upper bound}
\textcolor{black}{The maximum estimation error in SpaceSaving$^\pm$ with $\frac{2\alpha}{\epsilon}$ counters is less than $\frac{\epsilon}{2}(I-D).$}
\end{lemma}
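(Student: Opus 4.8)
The plan is to mirror the inductive argument used for Theorem~\ref{theorem: lazy frequency estimation}, but now tracking the amount by which any monitored item is \emph{over}estimated (equivalently, bounding the stored $error_{item}$ values). Concretely I would prove, by induction on the number of processed stream operations, the invariant that every monitored item $i$ satisfies $\hat{f}(i) - f(i) \le error_{i}$ together with $error_{i} < \frac{\epsilon}{2}(I-D)$. The claimed bound on the maximum estimation error then follows for the overestimation direction, the underestimation direction being the content of the next helpful lemma. For unmonitored items $\hat{f}(i)-f(i) = -f(i) \le 0$, so they are never overestimated and need no attention here.

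The base case and the insertion case require nothing new: incrementing the counter of a monitored item, or monitoring a fresh item in a non-full sketch, creates no error, exactly as in the original SpaceSaving analysis. The only place error is introduced is an eviction, where an unmonitored new item $x$ replaces $minItem$ with $error_{x} := minCount$ and $count_{x} := minCount+1$, while $minItem$ leaves with $\hat{f}(minItem)=0$ and hence overestimation at most $minCount$. Here I would invoke Lemma~\ref{lemma: upperbound on minCount in Bounded Deletion Model} to get $minCount \le \frac{\epsilon}{2\alpha} I \le \frac{\epsilon}{2}(I-D)$, using $I/\alpha \le I-D$ from the bounded deletion assumption. To upgrade this to the strict inequality, note that immediately after the eviction every counter other than $x$'s is $\ge minCount$ (it was the minimum before) while $count_{x} = minCount+1$, so the current sum of all $\frac{2\alpha}{\epsilon}$ counters is at least $\frac{2\alpha}{\epsilon}\,minCount + 1$; but that sum equals (insertions so far) $-$ (deletions so far) $\le I$, which forces $minCount \le \frac{\epsilon(I-1)}{2\alpha} < \frac{\epsilon I}{2\alpha} \le \frac{\epsilon}{2}(I-D)$.

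The genuinely new case is a deletion of an unmonitored item, where SpaceSaving$^\pm$ decrements the counter of the item $j$ with the largest estimation error. This operation leaves every $error_{item}$ value unchanged and only decreases $\hat{f}(j)=count_{j}$ while $f(j)$ stays fixed, so $\hat{f}(j)-f(j)$ only shrinks and all other items are untouched; hence the maximum overestimation, and the maximum stored error, cannot grow and the invariant is preserved. (This step can drive $j$ into underestimation, which is exactly what the following lemma must control; one should also remark that it is well defined, since a legal deletion only remains while some monitored counter is positive.)

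The main obstacle I anticipate is not any individual case but keeping the three quantities tied together cleanly: maintaining $\hat{f}(i)-f(i)\le error_{i}$ across the ``decrement the max-error counter'' step, which perturbs $count_j$ but not $error_j$, and extracting the strict inequality rather than merely $\le$ --- for which the ``$+1$'' contributed by the evicting item's own counter, as used above, is the decisive observation. Everything else reuses the template already set up for Lazy SpaceSaving$^\pm$.
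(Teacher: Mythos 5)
Your proof is correct and follows essentially the same route as the paper: the stored errors change only when an eviction sets the new item's error to $minCount$, which is then bounded by $\frac{\epsilon}{2}(I-D)$ via the preceding $minCount$ lemma; your induction wrapper and the counter-sum argument for the strict inequality are just more careful versions of what the paper states informally. One small inaccuracy: in the unmonitored-deletion step, Algorithm~\ref{bounded deletion Space Saving} decrements $error_j$ together with $count_j$ (it does not leave the error values unchanged), but since this only lowers the maximum stored error, your invariant—and hence the lemma—still goes through.
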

\begin{proof}
\textcolor{black}{The estimation error only increase when $minItem$ is replaced by a newly inserted item and after the replacement, the estimation error becomes $minCount$.  $minCount$ is maximized when the input contains $I$ insertions and no deletions. Hence by Lemma~\ref{important lemma1}, SpaceSaving$^\pm$ with $\frac{2 \alpha}{\epsilon}$ counters has $minCount < \frac{\epsilon}{2}(I-D)$. The estimation error is at most $minCount$ and thus less than $\frac{\epsilon}{2}(I-D).$}
\end{proof}

\begin{lemma} \label{lemma: max estimation error lower bound} 
\label{lemma: SpaceSaving+- sum estiamted error}
\textcolor{black}{The sum of all estimation errors in SpaceSaving$^\pm$, is an upper bound on the sum of frequencies of all unmonitored items and the maximum estimation error is lower bounded by 0.
}
\end{lemma}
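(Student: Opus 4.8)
The plan is to establish both halves of the statement by a single induction on the number of stream operations processed, carrying three invariants simultaneously: (i) every stored $error_i \ge 0$; (ii) for every currently monitored item $i$, $count_i - error_i \le f(i)$ (the lower-bound half of the usual SpaceSaving counter invariant); and (iii) $\sum_{i \in Sketch} count_i = I - D$, where $I$ and $D$ are the numbers of insertions and deletions seen so far. Invariant (i) is exactly the ``maximum estimation error is lower bounded by $0$'' claim. The first claim --- that $\sum_i error_i$ upper bounds $\sum_{j \notin Sketch} f(j)$ --- is not really inductive at all: it is a direct algebraic consequence of (ii) and (iii) at any single time. Indeed, summing (ii) over the monitored items gives $\sum_i count_i - \sum_i error_i \le \sum_{i \in Sketch} f(i)$, and since $\sum_i count_i = I - D = |F|_1 = \sum_{i \in Sketch} f(i) + \sum_{j \notin Sketch} f(j)$, rearranging yields $\sum_i error_i \ge \sum_{j \notin Sketch} f(j)$.

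So the real work is maintaining (i), (ii), (iii) across the operation types. First I would dispose of the base case (an empty sketch satisfies all three trivially). Insertions are handled exactly as in the original SpaceSaving, so (ii) and (iii) are preserved for the same reasons as in the insertion-only analysis: a newly monitored item enters either a free slot with $(count,error) = (1,0)$ or by evicting $minItem$ with $(count,error) = (minCount+1, minCount)$, and in both cases it satisfies $count - error = 1 \le f$, while $\sum count$ rises by exactly one; a monitored insertion raises one count and one true frequency together. A deletion of a \emph{monitored} item lowers that item's count and true frequency each by one, so (ii) still holds, no error is touched, and $\sum count$ falls by one, preserving (iii). The only genuinely new case is a deletion of an \emph{unmonitored} item $x$: SpaceSaving$^\pm$ decrements the count of the item $y$ with the largest stored error, and in lockstep decrements $error_y$, so $count_y - error_y$ and $f(y)$ are all unchanged and (ii) survives, while (iii) again loses exactly one from $\sum count$.

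The crux --- and the step I expect to be the main obstacle --- is invariant (i) in this last case: one must show $error_y \ge 1$ before the decrement, so that it stays nonnegative. This is where the two halves of the lemma become entangled, and the fact that the induction is simultaneous is what breaks the apparent circularity. Just before the deletion of the unmonitored item $x$ is processed, the bounded deletion promise forces $x$ to have been inserted at least once more than it has been deleted, so $f(x) \ge 1$ and hence $\sum_{j \notin Sketch} f(j) \ge 1$ at that instant; invoking the first claim in its \emph{previous-step} form (legitimate because (ii) and (iii) held after the preceding operation) gives $\sum_i error_i \ge 1$, so the maximum stored error --- precisely the one about to be decremented --- is at least $1$. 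Since errors are otherwise only ever created with value $0$ or $minCount \ge 0$, invariant (i) is maintained. I would also note the boundary case in which a monitored item's count would reach $0$: the item can simply be kept in the sketch with estimate $0$, which leaves all three invariants intact, so no special handling is required. Applying the principle of induction then yields both assertions of the lemma.
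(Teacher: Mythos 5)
Your proof is correct, and it reaches the same two conclusions by a related but more granular route than the paper. The paper works entirely at the aggregate level: it takes the insertion-only invariant of Lemma~\ref{lemma: SS estimation error upper} ($\sum_i error_i \ge \sum_{j \notin Sketch} f(j)$) as given for insertions, checks that a monitored deletion changes neither side and an unmonitored deletion decrements both sides by one, and then extracts ``maximum error $\ge 0$'' from $\sum_i error_i \ge 0$ together with $\sum_i error_i \le k \cdot \max_i error_i$. You instead carry the per-counter invariants $count_i - error_i \le f(i)$ and $\sum_i count_i = I-D$ and derive the aggregate inequality algebraically; your case analysis of the deletion steps is identical in substance to the paper's key observation. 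What your version buys is two things the paper leaves implicit: (a) a self-contained treatment of insertions rather than an appeal to the appendix lemma, and (b) the strictly stronger fact that \emph{every individual} stored error stays nonnegative, obtained by showing the error about to be decremented is at least $1$ (since the deleted unmonitored item $x$ satisfies $f(x)\ge 1$, so $\sum_i error_i \ge 1$ just before the decrement). The paper only asserts nonnegativity of the maximum, which is all that Theorem~\ref{theorem SpaceSaving+- freq} needs, but your sharper invariant is what actually guarantees the algorithm never drives a counter's error negative. One small attribution quibble: the fact that $f(x)\ge 1$ at the moment $x$ is deleted comes from the strict-turnstile assumption (deletions only apply to previously inserted, not-yet-deleted items), not from the bounded-deletion ratio $D \le (1-\frac{1}{\alpha})I$ itself; the substance of your step is unaffected.
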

\begin{proof}
\textcolor{black}{The deletion of a monitored item has no effect on the sum of the estimation errors, and it has no effect on the sum of the frequencies of the unmonitored items. The deletion of an unmonitored item decreases both the sum of the frequencies of the unmonitored items by 1 and the sum of the estimation error by 1. From this observation and Lemma~\ref{lemma: SS estimation error upper}, we can conclude that in SpaceSaving$^\pm$ with $k$ counters, the sum of all estimation errors is an upper bound on the sum of frequencies of all unmonitored items. Since the sum of frequencies of all unmonitored items is above or equal to 0 and the sum of all estimation errors is upper bounded by $k$ times the maximum estimation error, the maximum estimation error is lower bounded by 0.}
\end{proof}

\begin{algorithm}[]
\SetAlgoLined
    
    
    \For{item from deletions}{
        \uIf{item in Sketch}{
            $count_{item}$ -= 1 \;
        }
        \uElse{
            j = arg $max_{j \in Sketch} error_{j}$ \;
            $count_{j}$ -= 1 \;
            $error_{j}$ -= 1 \;
        }
    }
 \caption{SpaceSaving$^\pm$: Deletion Handling}
 \label{bounded deletion Space Saving}
\end{algorithm}


\begin{theorem} \label{theorem SpaceSaving+- freq}
In the bounded deletion model where $D\leq(1-1/\alpha)I$, after processing $I$ insertions and D $deletions$, SpaceSaving$^\pm$ using $O(\frac{\alpha}{\epsilon})$ space solves the frequency estimation problem in which $\forall i, |f(i) - \hat{f}(i)| < \epsilon(I-D)$ where $f(i)$ and $\hat{f}(i)$ are the exact and estimated frequencies of an item $i$.
\end{theorem}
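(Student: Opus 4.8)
The plan is to proceed by induction on the length of the processed stream prefix, maintaining the invariant that \emph{every} item's estimate is off by strictly less than $\epsilon(I-D)$, and to split this into an overestimation bound and an underestimation bound that are argued separately. For the overestimation side I would carry the auxiliary invariant that for every monitored item $i$ we have $\hat{f}(i)-f(i)\le error_i$: this holds the moment $i$ enters via a replacement (then $count_i=minCount+1$, $error_i=minCount$, while the true count since entry is at least $1$), it is preserved by insertions and deletions of monitored items since those move $count_i$ and $f(i)$ together, and it is preserved by the new unmonitored-deletion rule of Algorithm~\ref{bounded deletion Space Saving} because that rule decrements $count_i$ and $error_i$ simultaneously. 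Combined with Lemma~\ref{lemma: max estimation error upper bound} this immediately yields $\hat{f}(i)-f(i)<\tfrac{\epsilon}{2}(I-D)<\epsilon(I-D)$ for monitored items, and for unmonitored items $\hat{f}(i)-f(i)=-f(i)\le 0$, so overestimation is never the binding constraint.

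The real work is the underestimation bound $f(i)-\hat{f}(i)<\epsilon(I-D)$. For an \emph{unmonitored} item $i$ we have $\hat{f}(i)=0$, so I must show $f(i)<\epsilon(I-D)$; I would argue that $f(i)$ can never exceed the count $i$ held at the moment it was last evicted, which equals $minCount$ at that instant, and then invoke Lemma~\ref{lemma: upperbound on minCount in Bounded Deletion Model} applied at the eviction time — noting that a subsequent burst of deletions of monitored items can only shrink $minCount$, not increase $f(i)$, so the bound survives. For a \emph{monitored} item $i$, its count can dip below $f(i)$ only through the max-error decrements of Algorithm~\ref{bounded deletion Space Saving}. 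Each such decrement also lowers $error_i$ by one, and by Lemma~\ref{lemma: max estimation error lower bound} the maximum error is at least $1$ whenever an unmonitored deletion is processed (the item being deleted has true frequency at least $1$ and is counted among the unmonitored frequencies, whose sum lower-bounds the sum of errors), so the decremented item always has positive error and all errors stay non-negative. Counting decrements per ``epoch'' — the interval since $i$ last re-entered the sketch — the number $i$ can absorb is therefore at most its error at entry, which is at most $minCount\le\tfrac{\epsilon}{2}(I-D)$ by Lemma~\ref{lemma: upperbound on minCount in Bounded Deletion Model}; adding the deficit $i$ carried in at entry (which I claim is non-positive) keeps the total underestimation below $\epsilon(I-D)$.

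The main obstacle is precisely that claim — that an item re-enters the sketch with no accumulated underestimation, equivalently that an unmonitored item's true frequency never exceeds the current $minCount$. This is where the factor of two in the $\tfrac{2\alpha}{\epsilon}$ space bound and the deletion bound $D\le(1-\tfrac{1}{\alpha})I$ must be used together: a larger $\alpha$ permits more deletions and hence more shrinkage of $minCount$, but it simultaneously forces more counters, therefore more distinct items and more insertions before any eviction can occur, which enlarges $\epsilon(I-D)$; making this trade-off quantitative (and absorbing the short-stream edge cases into the strict inequality) is the delicate step. The base case is routine: with no deletions SpaceSaving$^\pm$ behaves exactly like SpaceSaving on $I$ insertions with $\tfrac{2\alpha}{\epsilon}$ counters, so by Lemma~\ref{lemma error bound} every error is below $\tfrac{\epsilon I}{2\alpha}\le\tfrac{\epsilon}{2}(I-D)$. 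Once both directions close, the stated $\forall i:\ |f(i)-\hat{f}(i)|<\epsilon(I-D)$ follows, and the space usage is $O(\tfrac{\alpha}{\epsilon})$ as claimed.
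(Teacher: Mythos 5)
Your overestimation half is essentially the paper's argument (monitored items satisfy $\hat{f}(i)-f(i)\le error_i$, and Lemma~\ref{lemma: max estimation error upper bound} caps this by $\tfrac{\epsilon}{2}(I-D)$), and your observation that the decremented item always has error at least $1$ is a correct refinement of Lemma~\ref{lemma: max estimation error lower bound}. The genuine gap is in the underestimation half: everything you do there rests on the claim that an unmonitored item's true frequency never exceeds the $minCount$ it held when last evicted, equivalently that an item re-enters the sketch with a non-positive deficit. You flag this yourself as ``the main obstacle'' and ``the delicate step'' and never prove it, so the proposal does not close. Worse, the claim is the analogue of Lemma~\ref{important lemma2} for plain SpaceSaving, and it does not carry over to SpaceSaving$^\pm$: the max-error decrements of Algorithm~\ref{bounded deletion Space Saving} can push a monitored item's count strictly below its true frequency, and if that item subsequently becomes $minItem$ and is evicted, it sits unmonitored with $f(i)>minCount$; deletions of monitored items can further shrink $minCount$ afterwards. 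So the step ``$f(i)$ can never exceed the count $i$ held at the moment it was last evicted'' is unjustified, and no amount of playing the factor $2$ in $\tfrac{2\alpha}{\epsilon}$ against $D\le(1-\tfrac{1}{\alpha})I$ is sketched concretely enough to replace it.

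The paper's proof avoids this invariant entirely. It charges the two possible causes of underestimation of any fixed item against the two halves of the budget: being evicted as $minItem$ costs at most $minCount\le\tfrac{\epsilon}{2}(I-D)$ by Lemma~\ref{lemma: upperbound on minCount in Bounded Deletion Model}, and absorbing max-error decrements costs at most $\tfrac{\epsilon}{2}(I-D)$ because an item's error is set to $minCount<\tfrac{\epsilon}{2}(I-D)$, only these decrements reduce it, and the maximum error never falls below $0$ by Lemma~\ref{lemma: max estimation error lower bound}; the two contributions sum to strictly less than $\epsilon(I-D)$. To repair your argument you would either have to prove the monitored-frequency invariant under Algorithm~\ref{bounded deletion Space Saving} (which the paper never asserts, and which appears false as stated) or restructure the underestimation bound along the paper's two-source accounting.
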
 \label{SS+- frequency proof}


\begin{proof}

\textcolor{black}{Consider an instance of SpaceSaving$^\pm$ with $\frac{2\alpha}{\epsilon}$ counters to process $I$ insertions and $D$ deletions. First, we prove there is no item $i$ such that the frequency estimate of $i$ severely overestimate its true frequency, i.e, $\nexists i, \hat{f}(i) - f(i) > \epsilon(I-D)$. In SpaceSaving$^\pm$, the handling of deletions can not lead to any overestimation as counters will only be decremented, and only the replacement of the $minItem$ due to a newly inserted item can lead to frequency overestimation of the newly inserted item. From lemma~\ref{lemma: upperbound on minCount in Bounded Deletion Model}, the $minCount$ in SpaceSaving$^\pm$ with $\frac{2\alpha}{\epsilon}$ counters is no more than $\frac{\epsilon}{2}(I-D)$. The overestimation of a newly inserted item can be at most $minCount$. Therefore, no item can be overestimated by more than $\frac{\epsilon}{2}(I-D)$.}

\textcolor{black}{Second, we prove there is no item that can be severely underestimated i.e, $\nexists i, \hat{f}(i) - f(i) < -\epsilon(I-D)$. Two operations may lead to underestimation: (i) Replacing $minItem$ can lead to frequency underestimation of $minItem$; (ii) Deletion of an unmonitored item can lead to frequency estimation of the item with maximum estimation error. For the first case, $minCount$ is always less than $\frac{\epsilon}{2}(I-D)$, and the amount of underestimation is less than $\frac{\epsilon}{2}(I-D)$ for any item due to the replacement.}

\textcolor{black}{We show that the deletion of an unmonitored item can lead to at most $\frac{\epsilon}{2}(I-D)$ frequency underestimation. Based on Lemma~\ref{lemma: max estimation error upper bound} and Lemma~\ref{lemma: max estimation error lower bound}, the maximum estimation error must be less than $\frac{\epsilon}{2}(I-D)$ and larger or equal to 0.
In Algorithm~\ref{bounded deletion Space Saving}, lines 6 and 7, the deletion of an unmonitored item decreases both the count and the estimation error of the item with the maximum estimation error. Call this item $x$. Since $x$'s counter decreases by 1, the difference between $x$'s frequency estimation and $x$'s true frequency, $\hat{f}(x) - f(x)$, also decreases by 1. Since the maximum estimation error is between $\frac{\epsilon}{2}(I-D)$ and 0, the number of decrements due to an unmonitored item is at most $\frac{\epsilon}{2}(I-D)$ times. Hence for any item, its frequency error is underestimated by at most $\frac{\epsilon}{2}(I-D)$ due to the deletion of unmonitored item. As a result, for any item, its frequency can be underestimated at most by $\epsilon(I-D)$ from replacing the $minItem$ and the deletions of the unmonitored items.}
\end{proof}

In Theorem~\ref{theorem SpaceSaving+- freq}, we proved that SpaceSaving$^\pm$ guarantees that all frequency estimations are off by no more than $\epsilon (I-D)$, i.e., $\forall i, |\hat{f}(i)-f(i)| \leq \epsilon(I-D)$. Note that unlike the original SpaceSaving, SpaceSaving$^\pm$ may underestimate but never severely underestimates. By reporting all the items with estimated frequency larger than 0, all frequent items must be reported, which can be proved by contradiction. 

\begin{theorem} \label{theorem spavesaving frequent}
In the bounded deletion model, where $D\leq(1-\frac{1}{\alpha})I$, SpaceSaving$^{\pm}$ solves the frequent items problem using $O(\frac{\alpha}{\epsilon})$ space.
\end{theorem}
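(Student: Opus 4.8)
The plan is a proof by contradiction that mirrors the frequent-items argument already given for Lazy SpaceSaving$^\pm$, except that it must lean on the two-sided error bound of Theorem~\ref{theorem SpaceSaving+- freq} rather than on a ``never underestimates'' property, which SpaceSaving$^\pm$ does not enjoy. First I would pin down the reporting rule: using the query of Algorithm~\ref{Space Saving query}, SpaceSaving$^\pm$ reports exactly those items whose estimated frequency is strictly larger than $0$, i.e.\ the monitored items with positive count. As noted for deterministic $\phi$ frequent items solutions in general, this set is allowed to contain some light items; what must be shown is that it contains every heavy one.

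Suppose, for contradiction, that some frequent item $x$, i.e.\ an item with $f(x) \geq \epsilon|F|_1 = \epsilon(I-D)$, is not reported. By the reporting rule this means $\hat f(x) \leq 0$: either $x$ is unmonitored, in which case Algorithm~\ref{Space Saving query} returns $\hat f(x)=0$, or $x$ is monitored with a non-positive count. (In fact the count can be shown never to go negative, since a deletion of an unmonitored item only decrements the item of maximum estimation error and, by Lemma~\ref{lemma: max estimation error lower bound}, that error stays nonnegative as long as there are unmonitored items present to be deleted; but this stronger fact is not needed here.) In either case $\hat f(x) \leq 0$, hence $|f(x) - \hat f(x)| = f(x) - \hat f(x) \geq f(x) \geq \epsilon(I-D)$, which contradicts Theorem~\ref{theorem SpaceSaving+- freq}, asserting $|f(i)-\hat f(i)| < \epsilon(I-D)$ for every item $i$. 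Therefore no frequent item is missed.

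For the space bound I would simply observe that the correctness of Theorem~\ref{theorem SpaceSaving+- freq} was established with $\frac{2\alpha}{\epsilon}$ counters, that $\frac{2\alpha}{\epsilon} = O(\frac{\alpha}{\epsilon})$, and that each counter holds an item identifier together with a count and an error field, so the sketch uses $O(\frac{\alpha}{\epsilon})$ space, matching the counter lower bound up to a constant factor.

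The one place that needs care is the step from ``$x$ is not reported'' to an inequality directly usable against Theorem~\ref{theorem SpaceSaving+- freq}. Because the reporting threshold is $0$ rather than $\epsilon(I-D)$, I rely on the strict form of the error bound; it is worth double-checking that Theorem~\ref{theorem SpaceSaving+- freq} is indeed stated with a strict inequality, since with only $|f(i)-\hat f(i)| \le \epsilon(I-D)$ a monitored frequent item could in principle have estimate exactly $0$ and the contradiction would fail. Beyond that subtlety the argument is immediate once Theorem~\ref{theorem SpaceSaving+- freq} is in hand.
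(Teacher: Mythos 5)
Your proposal is correct and follows essentially the same route as the paper: a proof by contradiction combining the report-everything-with-positive-estimate rule with the two-sided bound $|f(i)-\hat f(i)| < \epsilon(I-D)$ of Theorem~\ref{theorem SpaceSaving+- freq}. Your added care about the strict inequality (handling frequent items with $f(x)\geq\epsilon(I-D)$ rather than $>$) is a slight refinement of the paper's wording but does not change the argument.
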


\begin{proof} Proof by contradiction:

Assume SpaceSaving$^{\pm}$ algorithm does not report all frequent items. Then, there must exists a frequent item $x$ that is not reported. Since SpaceSaving$^{\pm}$ reports all the items with estimation frequency larger than 0 as frequent items (recall unmonitored items have estimated frequency of 0), $x$'s estimated frequency must be less than or equal to 0, i.e., $\hat{f}(x) \leq 0$. Moreover, since $x$ is a frequent item, then the true frequency estimation of $x$ must be larger than $\epsilon(I-D)$, i.e., $f(x) > \epsilon(I-D)$. The difference between the estimated frequency and its true frequency is then off by more than $\epsilon (I-D)$, i.e., $|f(x)-\hat{f}(x)| > \epsilon(I-D)$. This leads to a contradiction since it violates the $\epsilon$-approximation guarantee proved in Theorem~\ref{theorem SpaceSaving+- freq}.
\end{proof}
\subsection{An illustration of SpaceSaving$^\pm$}

\begin{figure}[tbph]
\centering
\includegraphics[scale=0.2]{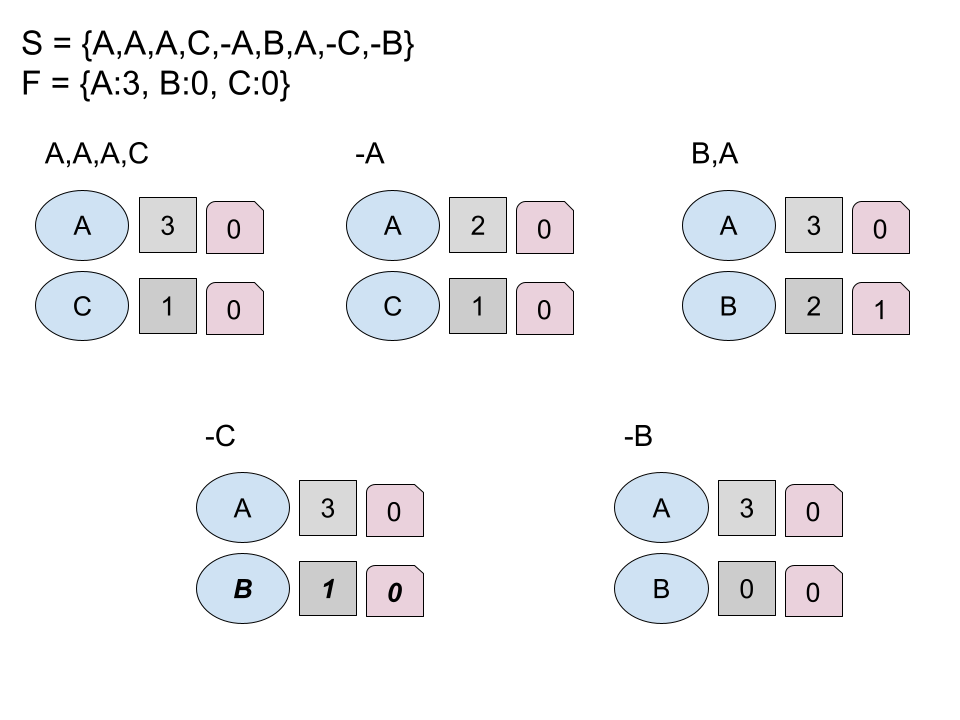}
\caption{Input Stream consists of 6 insertions and 3 deletions. Each tuple represents an item, estimated frequency, and estimation error.}
\label{fig:spacesaving +- approach example}
\end{figure}

\textcolor{black}{Consider the same input stream illustrated in Section~\ref{lazy example} and create an instance of SpaceSaving$^\pm$ with capacity of 2, in which input stream $\sigma$ is $(A,A,A,C,-A,B,A,-C,-B)$ where the minus sign indicate a deletion. The corresponding true frequency of $A$ is 3, while the true frequency of all other items is 0. The sketch image after digesting the first 7 items are exactly the same as in the previous example. When the deletion of item $C$ comes in, SpaceSaving$^\pm$ does not ignore the deletion of unmonitored item $C$, and since item $B$ has the largest estimation error, both $B$'s count and $B$'s estimation error are decreased. The final deletion of $A$ decreased $A$'s corresponding count. After processing the stream, the estimated frequency for $A$ and $B$ are 3 and 0 respectively, as shown in Figure~\ref{fig:spacesaving +- approach example}. The maximum frequency estimation error is 0 since $|\hat{f}(A)-f(A)| = 0$ and $|\hat{f}(B)-f(B)| = 0$. With the same bounded deletion stream and sketch space, Lazy SpaceSaving$^\pm$ overestimated the frequency of item $B$ by 1 (Section~\ref{lazy example}) and SpaceSaving$^\pm$ is able to further reduce the estimation error to 0. By judiciously handling the deletion of the unmonitored items, SpaceSaving$^\pm$ reduces the impact of overestimation and achieves better accuracy. }

\subsection{Min Heap and Max Heap}
\label{min-max-heap}
SpaceSaving algorithm is usually implemented with a standard min-heap data structure such that the operations that increase the item weights and that remove the minimum item can be performed in logarithmic time~\cite{berinde2010space}. To support the deletion of the unmonitored items, the SpaceSaving$^\pm$ algorithm further needs to find the item with the largest estimation error \textcolor{black}{ and modify the estimation errors efficiently.} From these observations, we use two heaps on both the estimated counts and the estimation errors, as underlying data structures. The estimated counts are stored in a min heap,  the estimation errors are stored in a max heap, \textcolor{black}{and a dictionary maps each item to the corresponding nodes in these two heaps,} as shown in Figure~\ref{fig:min-max-heap}. \textcolor{black}{Using two heaps and a dictionary with $O(k)$ space, both the minimum count and maximum estimation error can be found in $O(1)$ time; while insertions and deletions can be done in $O(\log(k))$ time. For example, if the sketch needs to delete an unmonitored item, then the procedures are: (1) use the dictionary to find the deletion is performed on an unmonitored item; (2) use the max heap to find the item with maximum estimation error; (3) use the dictionary to find the location of this item; (4) decrease both its count and its estimation error; (5) percolate down in max heap and percolate up in min heap;}

\begin{figure}[tbph]
\centering
\includegraphics[scale=0.2]{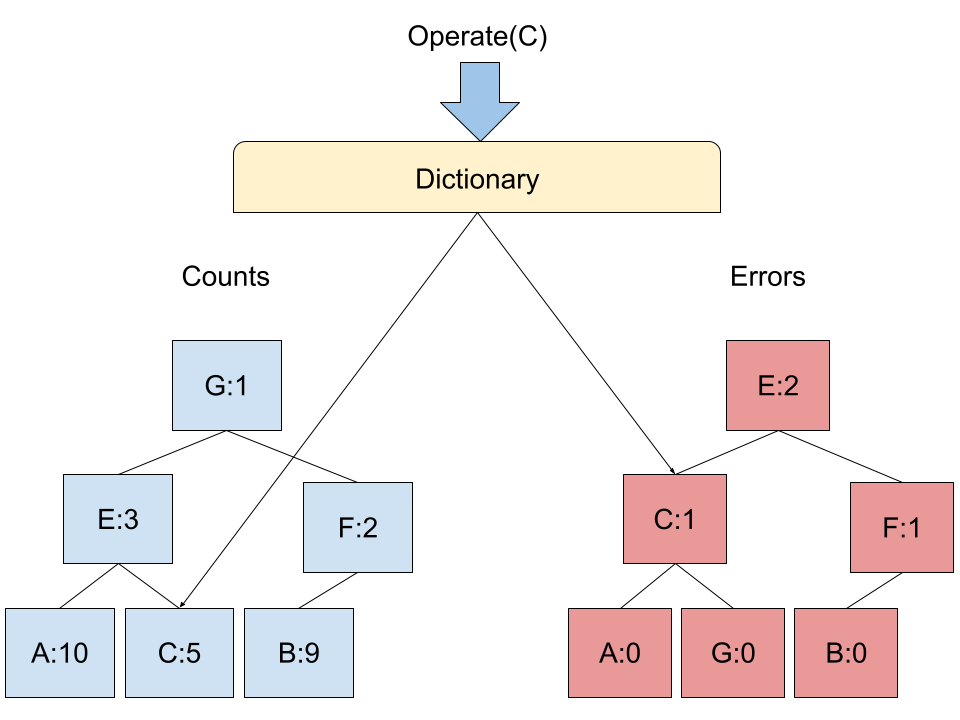}
\caption{\textcolor{black}{A dictionary points to the left min heap with items' count, and right max heap with items' estimation errors}.}
\label{fig:min-max-heap}
\end{figure}

\section{Quantile Sketch}
\label{sec-dydadic}
In this section, we propose Dyadic SpaceSaving$^\pm$ sketch, the first deterministic quantile sketch in the bounded deletion model. The Dyadic SpaceSaving$^\pm$ sketch is a universe-driven algorithm that accurately approximates quantiles with strong guarantees. 

\subsection{The Quantiles Problem}

The rank of an element $x$ is the total number of elements that are less than or equal to $x$, denoted as $R(x)$. The quantile of an element $x$ is defined as $R(x)/|F|_1$ where $F$ is the frequency vector. 
The most familiar quantile value is $0.5$ also known as \textit{median}. \textit{Deterministic} $\epsilon$ approximation quantile algorithms~\cite{greenwald2001space, shrivastava2004medians} take as input a precision value $\epsilon$ and an item such that the approximated rank has at most $\epsilon |F|_1$ additive error. The \textit{randomized} quantile algorithms provide a weaker guarantee in which the approximated rank of an item has at most $\epsilon |F|_1$ additive error with high probability.~\cite{ivkin2019streaming, karnin2016optimal, manku1998approximate}. 

Recently, Zhao et al.~\cite{zhaokll} proposed the first randomized quantile sketch KLL$^\pm$ in the bounded deletion model by generalizing the KLL~\cite{karnin2016optimal} from the insertion-only model. The first data sketch to summarize quantiles in the turnstile model was proposed by Gilbert et al.~\cite{gilbert2002summarize}, which breaks down the universe into dyadic intervals and maintains frequency estimations of elements for each interval.
Later, Cormode et al.~\cite{cormode2005improved} proposed the Dyadic Count-Min (DCM) sketch which replaces the frequency estimation sketch for each dyadic interval with a Count-Min, improving the overall space complexity to $O(\frac{1}{\epsilon} \log^{2}{U} \log{(\frac{\log{U}}{\epsilon})}))$ and update time to $O( \log{U}\log{(\frac{\log{U}}{\epsilon})})$. Then, Wang et al.~\cite{wang2013quantiles} proposed the Dyadic Count-Median sketch which replaces the Count-Min with the Count-Median~\cite{charikar2002finding} to further improve the space complexity to $O(\frac{1}{\epsilon} \log^{1.5}{U \log^{1.5}{(\frac{\log{U}}{\epsilon})})})$, while using the same update time complexity as DCM sketches. 


\subsection{DSS$^\pm$: A Deterministic Quantile Sketch}
We propose the Dyadic SpaceSaving$^\pm$ sketch to solve deterministic quantile approximation in the bounded deletion model. Inspired by the previous algorithms, we observe that by replacing the frequency estimation sketch in each dyadic layer with a SpaceSaving$^\pm$ of space $O(\frac{\alpha}{\epsilon}log(U))$ solves the quantile approximation in the bounded deletion model. Any range can be decomposed into at most $logU$ dyadic intervals~\cite{cormode2019answering}. Since SpaceSaving$^\pm$ with $O(\frac{\alpha}{\epsilon}log(U))$ space ensures that the frequency estimation has at most $\frac{\epsilon(I-D)}{logU}$ additive error and by summing up at most $logU$ frequencies, the approximated rank has at most $\epsilon(I-D)$ additive error and the approximated quantile has at most $\epsilon$ error. To update the DSS$^\pm$ quantile sketch with an item $x$ of weight $w$: for each $logU$ layers, $x$ is mapped to an element in that layer and updates the corresponding element's frequency, as shown in Algorithm~\ref{Quantile Update}. The rank information of an item can be calculated by summing $O(logU)$ number of subset sums, as shown in Algorithm~\ref{Quantile Query}. Therefore, the Dyadic SpaceSaving$^\pm$ sketch requires $O(\frac{\alpha}{\epsilon}log^{2}(U))$ space with update time $O(logUlog\frac{\alpha logU}{\epsilon})$. 

\begin{algorithm}[]
\SetAlgoLined
\SetKwInOut{Input}{Sketch}

    \For{h from 0 to logU}{
        DSS$^\pm$[h].update(x, w)\;
        
        x= x/2\;
    }
 \caption{DSS$^\pm$ Update(x,w)}
 \label{Quantile Update}
\end{algorithm}

\begin{algorithm}[]
\SetAlgoLined
\SetKwInOut{Input}{Sketch}
    Rank = 0\;
    \For{h from 0 to logU}{
        \uIf{ x is odd}{
            Rank = Rank + DSS$^\pm$[h].query(x)\;
        }
        x= x/2\;
    }
    return Rank\;
 \caption{DSS$^\pm$ Query(x)}
 \label{Quantile Query}
\end{algorithm}



\section{Experiments}
\label{sec-eval}
This section evaluates the performance of Lazy SpaceSaving$^\pm$ and SpaceSaving$^\pm$. They are the first deterministic frequency estimation and frequent item algorithms in the bounded deletion model and make no assumptions on the universe. The experiments aim to identify advantages and disadvantages of lazy SpaceSaving$^\pm$ and SpaceSaving$^\pm$ compared to other state of the art sketches such as: 
\begin{itemize}
    \item  \textbf{CSSS}~\cite{jayaram2018data} : The CSSS sketch is the first theoretical algorithm to solve the frequency estimation and frequent item problems in the bounded deletion model.
    
    \item \textbf{Count-Min~\footnote{See https://github.com/rafacarrascosa/countminsketch for implementation detail}}~\cite{cormode2005improved}: The Count-Min Sketch operates in the turnstile model and the estimated frequencies are never underestimated.
    
    \item \textbf{Count-Median~\footnote{See~\cite{cormode2020small} for implementation detail}}~\cite{charikar2002finding}: The Count-Median Sketch operates in the turnstile model and its frequency estimation is unbiased.
\end{itemize}

The quantile evaluations in aims in identifying characteristic of Dyadic SpaceSaving$^\pm$ and compare it with other state of the art quantile sketches with deletion functionality:

\begin{itemize}
    \item \textbf{KLL$^\pm$}: \textcolor{black}{KLL$^\pm$~\cite{zhaokll} is the state of the art randomized quantile sketch that operates in the bounded deletion model, and it has no assumption on the universe.}
    
    \item \textbf{DCS}: \textcolor{black}{Dyadic Count Sketch~\cite{wang2013quantiles} is the state of the art randomized quantile sketch that operates in the turnstile model, and it leverage the bounded universe.} 
\end{itemize}

\subsection{Experimental Setup}

We implemented SpaceSaving$^\pm$ using the min and max heap data structure described in Section~\ref{min-max-heap} in Python 3.7.6.
The main distinction from the original SpaceSaving~\cite{metwally2005efficient} are: (i) the use of a min heap on weights and a max heap on the estimation errors;
(ii) support of delete operations using the lazy approach Algorithm~\ref{Lazy Space Saving} or SpaceSaving$^\pm$ Algorithm~\ref{bounded deletion Space Saving}; and (iii) the overall space complexity is $O(\alpha/\epsilon)$ and the update time complexity is $O(log(\alpha/\epsilon))$. We also implemented the CSSS sketch as described in~\cite{jayaram2018data}. All the experimental metrics are averaged over 5 independent runs. Moreover, in all experiments, Lazy SpaceSaving$^\pm$ and SpaceSaving$^\pm$ use the same amount of space, while the universe size is $U=2^{16}$, and we set $\delta = U^{-1}$ to align the experiments with the theoretical literature~\cite{bhattacharyya2018optimal, jayaram2018data}.

\subsection{Data Sets}
The experimental evaluation is conducted using both synthetic and real world data sets consisting of items that are inserted and deleted. For the synthetic data, we consider
three different distributions:
\begin{itemize}
    
    \item \textbf{Zipf Distribution}: The elements are drawn from a bounded universe and the frequencies of elements follow the Zipf Law \cite{zipf2016human}, in which \textcolor{black}{ the frequency of an element with rank $R$: $f(R,s) = \frac{constant}{R^{s}}$ where $s$ indicates skewness.} Deletions are uniformly chosen from the insertions.

    \item \textbf{Binomial Distribution}: The elements are generated according to the binomial distribution with parameters $n$ and $p$ where $p$ is the probability of success in $n$ independent Bernoulli trials.
    
\end{itemize}

\textcolor{black}{In addition to the synthetic data sets, we used the following real world CAIDA Anonymized Internet Trace 2015 Dataset~\cite{trace}.}
\begin{itemize}
\item \textbf{2015 CAIDA Dataset}:
\textcolor{black}{The CAIDA dataset is collected from the ‘equinixchicago’ high-speed monitor. In the experiment, we use 5 disjoint batches of 2 million TCP packets. For frequency estimation and frequent items evaluation, the insertions are the destination IP addresses and deletions are randomly chosen from insertions. For quantile evaluation, since DSS$^\pm$ need to perform division on the item, the insertions are the source port integer and deletions are randomly chosen from insertions.}
\end{itemize}

We also conducted experiments by exploring two additional patterns of the data sets:
\begin{itemize}

\item \textbf{Shuffled}: The insertions are randomly shuffled and the deletions are randomly and uniformly chosen from insertions.

\item \textbf{Targeted}: 
The insertions are randomly shuffled and the deletions delete the item with the least frequency.
\end{itemize}

\textcolor{black}{The metrics used in the experiments are averaged over 5 independent runs and they are:}
\begin{itemize}

\item \textbf{Mean Squared Error}: \textcolor{black}{The mean squared error (MSE) is the average of the squares of the frequency estimation errors.}

\item \textbf{Recall}: \textcolor{black}{The recall is defined as $\frac{TP}{TP+FN}$ where $TP$ (true positive) is the number of items that are estimated to be frequent and are indeed frequent and $FN$ (false negative) is the number of items that are frequent but not included in the estimations.}

\item \textbf{Precision}: \textcolor{black}{Precision is defined as $\frac{TP}{TP+FP}$ where $FP$ (false positive) is the number of items that are estimated to be frequent but are not frequent.}

\item \textbf{Kolmogorov-Smirnov divergence}: \textcolor{black}{Kolmogorov-Smirnov divergence \cite{cantelli1933sulla} is the \textit{maximum} deviation among all quantile queries, a measurement widely used to perform comparisons between two distributions~\cite{ivkin2019streaming}.}

\end{itemize}

\textcolor{black}{The experiments are presented in the following three subsections: frequency estimation, frequent item, and quantile approximation experiments.}

\subsection{Frequency Estimation Evaluation}

\begin{figure*}[h]
\centering
        \begin{subfigure}{.32\textwidth}
        \centering
        \includegraphics[scale = 0.4]{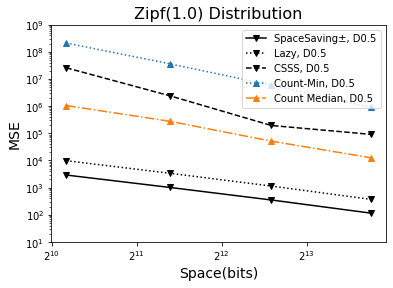}%
        \caption{}
        \end{subfigure}
        \begin{subfigure}{.32\textwidth}
        \centering
        \includegraphics[scale = 0.4]{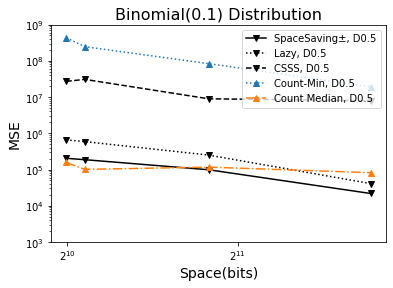}
        \caption{}
        \end{subfigure}
        \begin{subfigure}{.32\textwidth}
        \centering
        \includegraphics[scale = 0.4]{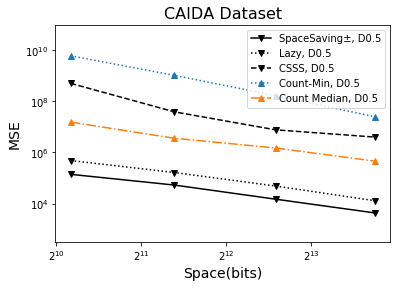}
        \caption{}
        \end{subfigure}
        
    \centering
        \begin{subfigure}{.32\textwidth}
        \centering
        \includegraphics[scale = 0.4]{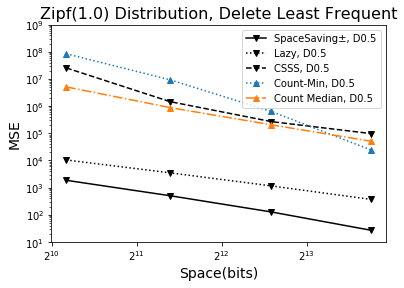}%
        \caption{}
        \end{subfigure}
        \begin{subfigure}{.32\textwidth}
        \centering
        \includegraphics[scale = 0.4]{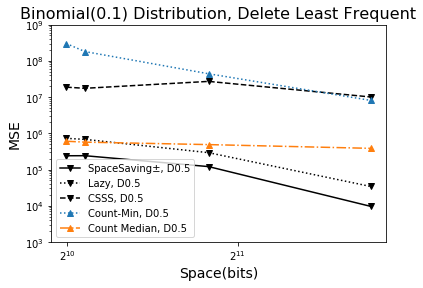}
        \caption{}
        \end{subfigure}
        \begin{subfigure}{.32\textwidth}
        \centering
        \includegraphics[scale = 0.4]{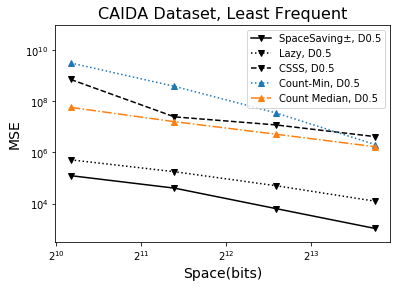}
        \caption{}
        \end{subfigure}
\caption{Trade-off between space and accuracy on various data distributions and different patterns}
\label{fig:MSE vs Space}
\end{figure*}

In this section, we compare Lazy SpaceSaving$^{\pm}$ and SpaceSaving$^{\pm}$ with state of the art Count-Min sketch, Count-Median Sketch, and CSSS Sketch. These experiments evaluate the accuracy of each sketch using the mean square error (MSE) while increasing the sketch size. The mean square error is the \textit{average} of the squares of the difference between items' estimated frequency and the true frequency, a measurement widely used to judge the accuracy of an estimation. It also serves as an empirical estimation of the variance~\cite{cormode2021frequency}. In MSE figures the x-axis denotes the sketch size while the y-axis depicts the average of the mean square errors. Since the mean square error is strictly positive, the lower y-axis values indicate better accuracy. In the following experiments, we assume all insertions arrive before any deletions into the sketch which is an adversarial pattern as spatial locality is minimized.

\subsubsection{Sketch Size}

In this experiment, the input data has I insertions and D deletions, and the delete:insert ratio is 0.5. The deletion pattern is either shuffled, randomly chosen from insertions, or targeted delete of the least frequent items. The Zipf and Binomial distributions have $|F|_{1} = 10^5$ and \textcolor{black}{ the CAIDA dataset has $|F|_{1} = 10^6$, with two million insertions and one million deletions.} This experiment explores the effect of distribution skewness and the space size effect of sketches operating in both the bounded deletion model and in the turnstile model.

As expected, all sketches share the same pattern: increasing the sketch size leads to decrease in the MSE, shown in Figure~\ref{fig:MSE vs Space}. All experiments show SpaceSaving$^\pm$ has the lowest MSE and best accuracy as the sketch size grows. For the skewed Zipf distribution and CAIDA dataset, SpaceSaving$^\pm$ is the clear winner for all sketch sizes, as shown in Figure~\ref{fig:MSE vs Space}. For the lesser skewed binomial distribution, Count-Median performs competitively compared to SpaceSaving$^\pm$; however, SpaceSaving$^\pm$ eventually has better accuracy as the sketch size increases, as shown in Figure~\ref{fig:MSE vs Space}(b,e). The CSSS sketch has accuracy between Count-Median and Count-Min sketches. The Count-Min sketch often overestimates an item's frequency and thus has higher mean square error across all distribution. 

The targeted deletion pattern, when the least frequent items are targetted for deleteion, leads to a slight decrease in MSE across all distributions for Count-Min. The targeted delete pattern decreases the cardinality of $F$, increases the overall skewness, and hence heavy hitter items become more dominant and all sketches are able to capture the overall change and have less mean square error.


\subsubsection{Delete:Insert Ratio}
\begin{figure}[tbph]
\centering
\includegraphics[scale=0.42]{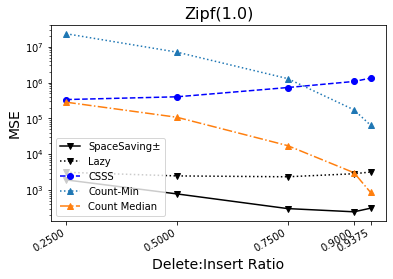}
\caption{Varying delete:insert ratio.}
\label{fig:ratio and error}
\end{figure}

Sketches in the bounded deletion model have their space complexity dependent on the parameter $\alpha$, which is an upper bounds on the delete:insert ratio. With higher delete:insert ratio, these sketches need to increase their sketch space to tolerate the increase in deletions in order to deliver the same guarantee. In this experiment, the sketch space is $10^3{3}logU$ bits and the input stream length is fixed to one million items. The x-axis represents different delete:insert ratio, and the y-axis is the mean squared error averaged over 5 independent runs, as shown in Figure~\ref{fig:ratio and error}. 


As expected, the accuracy of Lazy SpaceSaving$^\pm$ and CSSS depends on $\alpha$ and their MSE increases as the delete:insert ratio increases. The more interesting result is that SpaceSaving$^\pm$'s MSE decreases when the delete:insert ratio is less or equal to 0.75. Moreover, for a universe of size $2^{16}$, SpaceSaving$^\pm$ provides MSE less than CSSS, Count-Min, and Count-Median even if the delete:insert ratio is as high as 0.9375, which is $\frac{logU-1}{logU}$, while using the same amount of space, as shown by the right most dots in Figure~\ref{fig:ratio and error}. By handling the deletion of unmonitored items judiciously, SpaceSaving$^\pm$'s frequency estimation is more robust to the increase in deletions than other algorithms in the bounded deletion model. \textcolor{black}{For sketches that operate in the turnstile model, the MSE of Count-Min and Count-Median decreases as the delete:insert ratio increases, since more deletions reduce the number of hash collisions and reduce the amount of over counting in each bucket.} If the universe size increases, the performance of linear sketches will further decrease, whereas the data-driven SpaceSaving$^\pm$ has no dependency on the universe, and can provide accurate estimations even in the extreme case of unbounded universe.

\begin{figure}[tbph]
\centering
\includegraphics[scale=0.42]{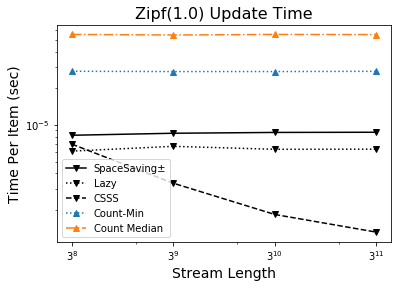}
\caption{Update times for Sketches}
\label{fig:updatetime}
\end{figure}

\begin{figure*}[h]
\centering
        \begin{subfigure}{.32\textwidth}
        \centering
        \includegraphics[scale = 0.36]{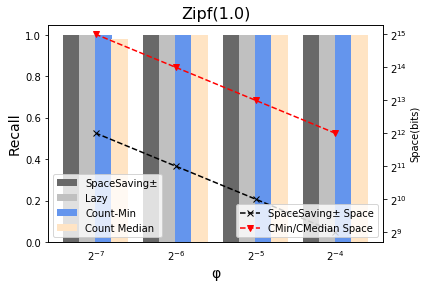}%
        \caption{}
        \end{subfigure}
        \begin{subfigure}{.32\textwidth}
        \centering
        \includegraphics[scale = 0.36]{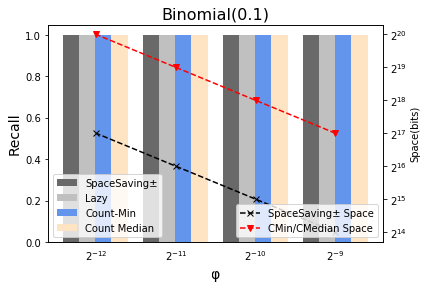}
        \caption{}
        \end{subfigure}
        \begin{subfigure}{.32\textwidth}
        \centering
        \includegraphics[scale = 0.36]{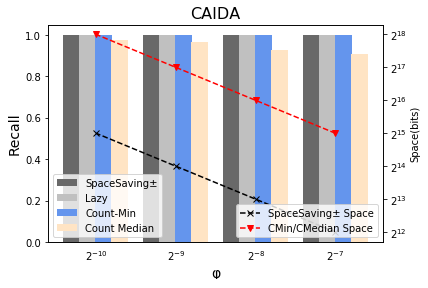}
        \caption{}
        \end{subfigure}
        
    \centering
        \begin{subfigure}{.32\textwidth}
        \centering
        \includegraphics[scale = 0.36]{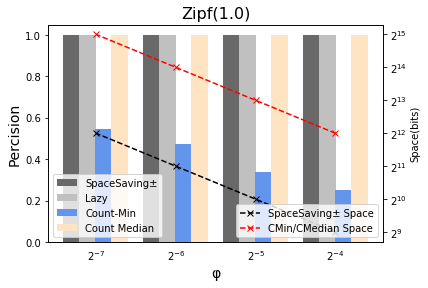}%
        \caption{}
        \end{subfigure}
        \begin{subfigure}{.32\textwidth}
        \centering
        \includegraphics[scale = 0.36]{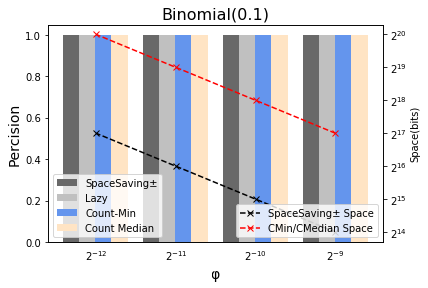}
        \caption{}
        \end{subfigure}
        \begin{subfigure}{.32\textwidth}
        \centering
        \includegraphics[scale = 0.36]{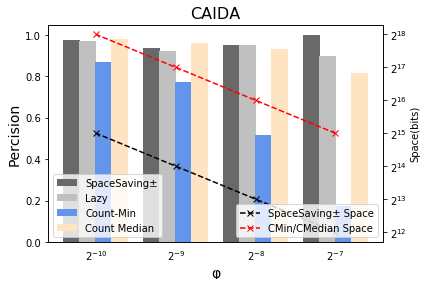}
        \caption{}
        \end{subfigure}
        
\caption{Recall and Precision Comparison} 
\label{fig:recall and precision}
\end{figure*}

\subsubsection{Update Time}

In Figure~\ref{fig:updatetime} \label{FE UpdateTime}, the x-axis is the stream length and the y-axis is the average latency in seconds per item over 5 independent runs. The input is a shuffled Zipf distribution and the delete:insert ratio is 0.5. All sketches use $10^3logU$ bits. As shown in Figure~\ref{fig:updatetime}, as expected, Lazy SpaceSaving$^\pm$ has slightly less update time than SpaceSaving$^\pm$. Since deletions are randomly drawn from the insertions, the heavy frequency items are more likely to appear in the deletions and hence fewer unmonitored items appear in the deletions. The lazy approach ignores these deletions of unmonitored items and hence has faster update time. 
CSSS sketch update time decreases as the stream length grows because it performs sampling to obtain $O(\frac{\alpha logU}{\epsilon})$ samples and runs Count-Median on these samples. As the stream length increases the sample size increases at a much slower pace and thus the average update time per item decreases. Count-Min and Count-Median have update times depend on the universe size where a larger universe size will further increase the update time. Since Count-Median performs more hashes than Count-Min, Count-Median requires more update time than Count-Min.

\subsection{Frequent Items Evaluation}

In this section, we compare the recall and precision of Lazy SpaceSaving$^\pm$ and SpaceSaving$^\pm$ with state of the art sketches on identifying the frequent items. All experiments in this section have delete:insert ratio of 0.5. The left y-axis depicts either the average recall or average precision over 5 independent runs: higher y-axis values indicate better recall or precision. 
The right y-axis denotes the space used for each sketch where Lazy SpaceSaving$^\pm$ and SpaceSaving$^\pm$ use $\frac{\alpha}{\epsilon}logU$ bits; Count-Min and Count-Median use $\frac{1}{\epsilon}log^2U$ bits. In the following experiments, all insertions arrive before any deletions. Since all true frequent items appear more than $\epsilon (I-D)$ times, each sketch queries all potential items and then reports all items with estimated frequency larger than $\epsilon (I-D)$ as the frequent items. 
In addition, the following experiments do not compare with the CSSS sketch. Although CSSS can solve the frequent item problem, CSSS is more of theoretical interest since it reduces the size of each counter from $O(logU)$ bits to $O(log(\alpha))$ bits but in practice, it requires a lot more space to solve the frequent item problem. More specifically, the sketch size increases by $192$ times.
The space increase is more significant than the space saved by reducing the number of bits per counter. 

\subsubsection{Recall}
In these experiment, we compare the recall among Lazy SpaceSaveing$^\pm$, SpaceSaving$^\pm$, Count-Min and Count-Median. In Figure~\ref{fig:recall and precision} (a), (b), and (c), the x-axis represents different frequent items threshold $\phi$ in which frequent items have frequency larger than or equal to $\phi |F|_1$. The right y-axis denotes the space budget in which Lazy SpaceSaving$^\pm$ and SpaceSaving$^\pm$ use $\frac{\alpha}{\epsilon}$ space; Count-Min and Count-Median use $\frac{logU}{\epsilon}$ space. The sketch space increases as $\phi$ decreases. The left y-axis is the recall ratio. As expected, Lazy SpaceSaving$^\pm$ and Count-Min sketch have 100\% recall across all distributions, since they never underestimate the true frequency. The Count-Median sketch may sometimes underestimate the frequency and thus does not always achieve 100\% recall. Theorem~\ref{theorem spavesaving frequent} shows that SpaceSaving$^\pm$ needs to report all items with frequency larger than 0 to achieve 100\% recall. In this experiment, SpaceSaving$^\pm$ reports items with frequency larger than $\phi |F|_1$. Since it might underestimate an item's frequency, the recall rate might not be 100\%. However, in these experiments, SpaceSaving$^\pm$ still achieves 100\% recall across all distributions.

\begin{figure*}[h]
    \centering
        \begin{subfigure}{.32\textwidth}
        \centering
        \includegraphics[scale = 0.4]{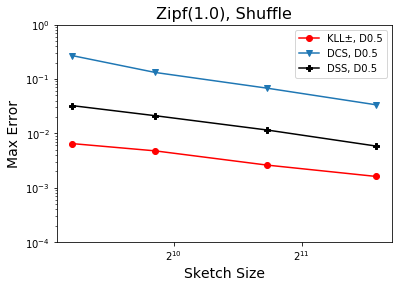}%
        \caption{}
        \end{subfigure}
        \begin{subfigure}{.32\textwidth}
        \centering
        \includegraphics[scale = 0.4]{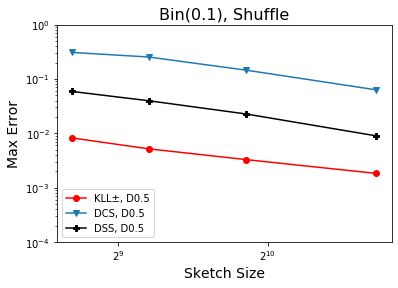}
        \caption{}
        \end{subfigure}
        \begin{subfigure}{.32\textwidth}
        \centering
        \includegraphics[scale = 0.4]{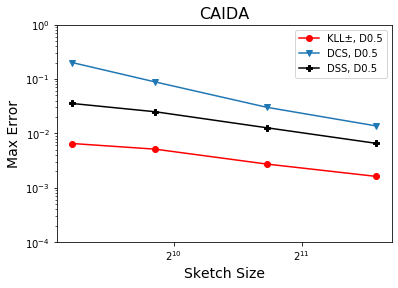}
        \caption{}
        \end{subfigure}
\caption{Quantile sketches trade-off between space and accuracy on different data distributions.}
\label{fig:Quantile Accuracy}
\end{figure*}

\subsubsection{Precision}
In this subsection, we compare the precision rates among Lazy SpaceSaveing$^\pm$, SpaceSaving$^\pm$, Count-Min and Count-Median. In Figure~\ref{fig:recall and precision} (d), (e), and (f), the x-axis represents the different frequent items threshold $\phi$. The right y-axis denotes the space budget in which Lazy SpaceSaving$^\pm$ and SpaceSaving$^\pm$ use $\frac{\alpha}{\epsilon}$ space; Count-Min and Count-Median use $\frac{logU}{\epsilon}$ space. The left y-axis is the precision ratio. Lazy SpaceSaving$^\pm$, SpaceSaving$^\pm$ and Count-Median have above 90\% precision for all $\phi$ and different distributions. SpaceSaving$^\pm$ has very high precision while using less space. Since Lazy SpaceSaving$^\pm$ sometimes overestimates an item's frequency by ignoring the deletion of unmonitored items, some items may be falsely classified as frequent items. Count-Min always overestimate items' frequencies and thus many items are incorrectly classified as frequent. 

\subsection{Quantile Evaluation} \label{Quantile Expirement}
\textcolor{black}{In this section, we experimentally evaluates Dyadic SpaceSaving$^\pm$, the first deterministic quantile approximation sketch in the bounded deletion model. We implemented the sketches in Python 3.7.6. Across all experiments, we assume the bounded universe size is $2^{16}$,  deletions are randomly chosen from insertions, and all insertions arrive before any deletions, an adversarial pattern that minimize locality.}

\subsubsection{Accuracy Comparison}

\textcolor{black}{In this section, we experimentally compares the quantile approximation accuracy among DSS$^\pm$, KLL$^\pm$, and DCS over different distributions and memory budgets. DSS$^\pm$ is deterministic, whereas KLL$^\pm$, and DCS are randomized. The binomial and zipf distribution has $|F|_1=10^5$, and CAIDA dataset has $|F|_1=10^6$. Across all distributions, sketches improve their accuracy as the memory budget increases, as shown in Figure~\ref{fig:Quantile Accuracy}. When the input change from binomial distribution to zipf distribution, the skewness increases and the DSS$^\pm$'s quantile approximation become more accurate, while the DCS's accuracy decreases. The reason is that, as the skewness increase, SpaceSaving$^\pm$'s frequency estimation becomes more accurate, and Count-Median's accuracy decrease which is also pointed out in~\cite{cormode2008finding}. KLL$^\pm$ performs the best and DSS$^\pm$ has better performance than DCS across all distributions. If the universe size becomes larger, than the performance of universe-based algorithm DSS$^\pm$ and DSS will further decrease.}

\subsubsection{Delete:Insert Ratio}

\begin{figure}[tbph]
\centering
\includegraphics[scale=0.42]{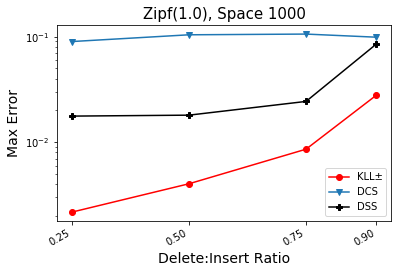}
\caption{Varying delete:insert ratio.}
\label{fig:Quantile ratio}
\end{figure}
\textcolor{black}{In this section, we experimentally compare DSS$^\pm$, KLL$^\pm$, and DCS sketches with different delete:insert ratio under same space budget of 1000. The input bounded deletion stream has length $10^6$ and it is shuffled zipf(1.0) distribution. In Figure~\ref{fig:Quantile ratio}, the y-axis is the maximum error and the x-axis is the delete:insert ratio and smaller y-value implies better accuracy. The result is aligned with the theoretical expectation, as the delete:insert ratio increase, the maximum error of quantile sketches operate in the bounded deletion model increases, since both KLL$^\pm$ and DSS$^\pm$ have dependence on $\alpha$ and $DCS$ has no dependence on $\alpha$. Using the same space budget, both KLL$^\pm$ and DSS$^\pm$ have better accuracy when the insert:delete ratio reaches 0.9. }

\subsubsection{Update Time}

\begin{figure}[tbph]
\centering
\includegraphics[scale=0.42]{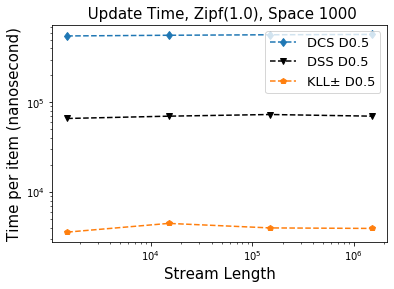}
\caption{Update times for Quantile Sketches}
\label{fig:Quantile Updatetime}
\end{figure}

\textcolor{black}{In this section, we experimentally compare the update time among DSS$^\pm$, KLL$^\pm$, and DCS sketches under same space budget. The input bounded deletion stream is shuffled zipf(1.0) distribution. In Figure~\ref{fig:Quantile Updatetime}, the y-axis is the update-time and the x-axis is the stream length and smaller y-value implies faster update time per item. The result is aligned with the theoretical expectation, the update time is independent from the stream length. Both DSS$^\pm$ and DCS rely on the dyadic structure over the bounded universe and have update time depend on the universe size. DSS$^\pm$ is faster than DCS as SpaceSaving$^\pm$ has faster update time than Count-Median, as shown section~\ref{FE UpdateTime}. KLL$^\pm$ uses sampling technique to digest arriving operations. It's update time is $O(log\frac{\alpha^{1.5}}{\epsilon})$ with no dependency on the universe size and hence it achieve the fastest update time.}

\section{Conclusion}
\label{sec-concl}
Frequency estimation and frequent items are two important problems in data stream research, and have significant impact for real world systems. Over the past decades of research, many algorithms have been proposed for the insertion-only and the turnstile models. In this work, we propose data-driven deterministic SpaceSaving$^\pm$ sketches, which maintain a subset of input items, to accurately approximate item frequency and report heavy hitter items in the bounded deletion model. To our knowledge, SpaceSaving$^\pm$ is the first deterministic algorithm to solve these two problems in the bounded deletion model and make no assumption on the universe. The experimental evaluations of SpaceSaving$^\pm$
highlight that it has the best frequency estimation accuracy among other state of the art sketches, and requires the least space to provide strong guarantees. We also demonstrate that implementing SpaceSaving$^\pm$ with the min and max heap approach provides fast update time. Furthermore, the experiments showcase that SpaceSaving$^\pm$ has very high recall and precision rates across a range of data distributions. These characteristics of SpaceSaving$^\pm$ make it a practical choice
for real world applications. Finally, by leveraging SpaceSaving$^\pm$ and dyadic intervals over bounded universe, we proposed the first deterministic quantile sketch in the bounded deletion model. Our analysis clearly demonstrates that overall, for an unbounded universe or for practical delete:insert ratios below $\frac{logU-1}{logU}$ (e.g., for a realistic universe size of U=$2^{16}$, a ratio of .93 and for U=$2^{32}$, a ratio of .96), SpaceSaving$^\pm$ is the best algorithm to use and solves several major problems with strong guarantees in a unified algorithm.



\clearpage
\balance

\bibliographystyle{ACM-Reference-Format}
\bibliography{citations}
\clearpage
\begin{appendix}

\section{Missing Proofs}


Lemma~\ref{lemma: SS estimation error upper}. \textcolor{black}{\textit{The sum of all estimation error is an upper bound on the sum of frequencies of unmonitored items.}}
\begin{proof}
\textcolor{black}{The SpaceSaving algorithm of $\frac{1}{\epsilon}$ entries with the input stream $\sigma$ can be seen as a collection of entries where each entry process a sub-stream $\iota$, i.e, either increment the item's count, or replace the item with another item and then update the count and the estimation error. The union of all sub-stream becomes the input stream $\sigma$. 
We want to show that for each entry, its estimation error is an upper bound on the sum of frequencies of items not monitored but assigned to this entry after processing its corresponding $\iota_{entry}$. Hence, the sum of all estimation errors becomes the upper bound on the sum of frequencies of all unmonitored items. }

\textcolor{black}{Consider an arbitrary entry $(x, count_x, error_x)$ and its corresponding sub-stream $\iota$. We want to show that $count_x \geq error_x \geq \forall i \in \iota, \pi(i \ne x)f(i)$. We can proof by induction:
}

\textit{Base case:} \textcolor{black}{Before any input, count is 0, estimation error is 0 and there are no unmonitored items.}

\textit{Induction hypothesis:} \textcolor{black}{After $i$ operations, the entry $(x, count_x, error_x)$ satisfy the relationship: $count_x \geq error_x \geq \forall i \in \iota, \pi(i \ne x)f(i)$.} 

\textit{Induction Step:} \textcolor{black}{Consider the case when the $(i+1)^{th}$ input arrives. If the newly inserted item is $x$, then the count will increase by $1$ and hence the inequality chain still holds. If the newly inserted item is $y$ which is different from the current item, then $x$ is replaced by $y$. The count becomes $count_x + 1$, and estimation error become $count_x$. Since $count_x + 1$ is also the total number of item seen by this entry, then in worst case $y$ only appeared once and hence the sum of frequencies of all unmonitored item from $\iota$ is at most $count_x$ which is the new estimation error. Therefore, $count_y = count_x + 1 \geq count_x = error_y\geq \forall i \in \iota, \pi(i \ne y)f(i)$.}

\textit{Conclusion:} \textcolor{black}{By the principle of induction, for each entry its estimation error is an upper bound of the sum of frequencies of items not monitored but assigned to the entry.}

\textcolor{black}{Based on the induction proof, we know the sum of all estimation errors is the upper bound on the sum of all the frequencies of the items not monitored by but assigned to each individual entry. Moreover, the sum of frequencies of all unmonitored item upper bounded by the sum of all the frequencies of the items not monitored by but assigned to each individual entry after processing its corresponding $\iota$. Hence, the sum of all estimation errors is the upper bound on the sum of frequencies of unmonitored items.}
\end{proof}

Lemma~\ref{lemma error bound}. \textcolor{black}{\textit{SpaceSaving with $O(\frac{1}{\epsilon})$ space can estimate any items' frequency with an additive error less than $\epsilon I$.}}
\begin{proof} Proof By Induction.

\textit{Base case:} \textcolor{black}{Before any insertions, all items have frequency of 0 and all items have estimated frequency of 0.}

\textit{Induction hypothesis:} \textcolor{black}{After $i < I$ insertions, the maximum frequency estimation error of the sketch less than $\epsilon I$.}

\textit{Induction Step:} \textcolor{black}{Consider the case when the $(i+1)^{th}$ insertion arrives. If the newly inserted item $x$ is monitored or the sketch is not full, then no error is introduced. If the newly inserted item $x$ is not monitored and the sketch is full, then $x$ replaces the $minItem$ which has the the minimum associated count, $minCount$. The $minCount$ is upper bounded when every items inside the sketch has the same count, and hence $minCount \leq i\frac{\epsilon}{1}<\epsilon I$. The estimated frequency for $x$ is $minCount$+1 and x is at most overestimated by $minCount$. The frequency estimation for $minItem$ becomes 0, and $minItem$'s frequency estimation is underestimated by at most $minCount$. Therefore, the $error_{max}$ after processing the newly inserted item is still less than $\epsilon I$. }

\textit{Conclusion:} \textcolor{black}{By the principle of induction, SpaceSaving using $O(\frac{1}{\epsilon})$ space solves the frequency estimation problem with bounded error, i.e, $\forall i, |f(i) - f'(i)| < \epsilon I$.}

\end{proof}

\end{appendix}

\end{document}